\documentclass[11pt,a4paper]{article}

\usepackage{amsfonts,amsmath,latexsym,amsthm}
\usepackage{mathabx}

\usepackage{amssymb}
\usepackage[margin=2.5cm]{geometry} 

\usepackage[UKenglish]{babel}

\usepackage{graphicx}
\usepackage{xspace}
\usepackage{paralist}
\usepackage{enumitem}
\usepackage{subcaption}
\usepackage{balance}
\usepackage[utf8]{inputenc}
\usepackage{todonotes}

\usepackage[nothing]{algorithm}
\usepackage{algorithmicx}
\usepackage[noend]{algpseudocode}
\usepackage{array}

\floatname{algorithm}{Pseudocode}

\newcommand{\para}[1]{\medskip\noindent\textbf{#1}}
\newcommand{\assign}{\,{\gets}\,}
\newcommand{\eat}[1]{}
\newtheorem{claim}{Claim}
\newtheorem{observation}{Observation}

\newcommand{\mycase}[1]{\smallskip\noindent{\bf Case #1:}}
\newcommand{\etal}{{\em et al.}}

\def\advstrat{\textsc{AdvStrategy}}
\def\refineintervals{\textsc{RefineIntervals}}

\def\gap{\operatorname{gap}}
\def\rank{\operatorname{rank}}
\def\next{\operatorname{next}}
\def\prev{\operatorname{prev}}
\def\argmax{\operatorname{arg\,max}}

\newcommand{\polylog}{\operatorname{polylog}}
\newcommand{\poly}{\operatorname{poly}}

\newcommand{\half}{\frac{1}{2}}
\let\eps\varepsilon
\let\rho\varrho
\newcommand{\oneovereps}{\frac{1}{\eps}}
\newcommand{\oneoverdelta}{\frac{1}{\delta}}
\newcommand{\oneoverepssquared}{\frac{1}{\eps^2}}

\newcommand{\ds}{\mathcal{D}}
\newcommand{\Sbar}{\widebar{S}}

\hyphenation{app-ro-xi-ma-te}
\hyphenation{com-pa-ri-son}
\usepackage{hyperref}
\hypersetup{pdftitle={Tight Lower Bound for Comparison-Based Quantile Summaries}}
\hypersetup{pdfauthor={Graham Cormode and Pavel Vesel\'{y}}}

\newcounter{thm}
\numberwithin{thm}{section}
\theoremstyle{plain}
\newtheorem{theorem}[thm]{Theorem}

\newtheorem{lemma}[thm]{Lemma}

\newtheorem{definition}[thm]{Definition}

\usepackage{authblk}

\renewcommand{\O}{\mathcal{O}}

\begin{document}
\allowdisplaybreaks
\title{A Tight Lower Bound for Com\-pa\-ri\-son-Based Quantile Summaries\footnote{A preliminary version of this work has appeared in Proceedings of the 39th ACM SIGMOD-SIGACT-SIGAI Symposium on Principles of Database Systems (PODS ’20).
	Compared to that version, this manuscript contains a corrected proof of the lower bound for relative-error algorithms (Theorem~\ref{thm:biasedQuantilesLB}), using a stronger version of the main result; namely, we prove a lower bound on the \emph{average} space usage of a uniform-error algorithm, not just the maximum space usage.}}

\author[1]{Graham Cormode}
\author[2]{Pavel Vesel{\'{y}}}

\affil[1]{University of Oxford, UK, \texttt{graham.cormode@cs.ox.ac.uk}}
\affil[2]{Charles University, Czech Republic, \texttt{vesely@iuuk.mff.cuni.cz}}

\date{}

\maketitle

\begin{abstract}
Quantiles, such as the median or percentiles, provide concise and useful information
about the distribution of a collection of items, drawn from a totally ordered universe.
We study data structures, called quantile summaries, which keep track
of all quantiles of a stream of items,
up to an error of at most $\eps$.
That is, an $\eps$-approximate quantile summary first
processes a stream and then, given any quantile query $0\le \phi\le 1$, returns an item from the stream,
which is a $\phi'$-quantile for some $\phi' = \phi \pm \eps$.
We focus on comparison-based quantile summaries that can only compare two items
and are otherwise completely oblivious of the universe.

The best such deterministic quantile summary to date,
due to Greenwald and Khanna~\cite{greenwald01_quantile_summaries},
stores at most $\O(\oneovereps\cdot \log \eps N)$ items, where $N$ is the number of items in the stream.
We prove that this space bound is optimal by showing a matching lower bound.
Our result thus rules out the possibility of constructing a
deterministic comparison-based quantile summary in space $f(\eps)\cdot o(\log N)$, for any function
$f$ that does not depend on $N$.
As a corollary, we improve the lower bound for biased quantiles,
which provide a stronger, relative-error guarantee of $(1\pm \eps)\cdot \phi$,
and for other related computational tasks.
\end{abstract}

\section{Introduction}

The streaming model of computation is a useful abstraction to
understand the complexity of working with large volumes of data, too
large to conveniently store.
Efficient algorithms
are known for many basic functions, such as finding frequent items,
computing the number of distinct items, and measuring the empirical
entropy of the data. 
Typically, in the streaming model
we  allow just one pass over the data and a small amount of memory,
i.e., sublinear in the data size.
While computing sums, averages, or counts is trivial
with a constant memory, finding the median, quartiles, percentiles and their generalizations,
quantiles, presents a challenging task.
Indeed, four decades ago, Munro and Paterson~\cite{munro80_exact_quantiles} showed
that finding the exact median in $p$ passes over the data requires $\Omega(N^{1/p})$
memory, where $N$ is the number of items in the stream.
They also provide
a $p$-pass algorithm for selecting the $k$-th smallest item
in space $N^{1/p}\cdot \polylog(N)$, and a $\polylog(N)$-pass 
algorithm running in space $\polylog(N)$.

Thus, either large space, or a large number of passes is
necessary for finding the exact median.
For this reason,
subsequent research has mostly been concerned with the computation of approximate quantiles,
which are often sufficient for applications.
Namely, for a given precision guarantee $\eps > 0$ and a query $\phi\in [0,1]$,
instead of finding the $\phi$-quantile, i.e., the $\lfloor \phi N\rfloor$-th smallest item, we allow the algorithm
to return a $\phi'$-quantile for $\phi' \in [\phi - \eps, \phi + \eps]$;
such an item is called an \emph{$\eps$-approximate $\phi$-quantile}.
In other words, when queried for the $k$-th smallest item (where $k = \lfloor \phi N\rfloor$),
the algorithm may return the $k'$-th smallest item for some $k' \in [k - \eps N, k + \eps N]$.

More precisely, we are interested in a data structure,
called an \emph{$\eps$-approximate quantile summary}, that
processes a stream of items from a totally ordered universe in a single pass.
Then, it returns an $\eps$-approximate $\phi$-quantile
for any query $\phi\in [0,1]$.
We optimize the space used by the quantile summary,
measured in words, where a word can store any item or an integer
with $\O(\log N)$ bits (that is, counters, pointers, etc.).\footnote{
Hence, if instead $b$ bits are needed to  store an item, then the space complexity in bits
	is at most $\max(b, \O(\log N))$ times the space complexity in words.
} We do not assume that items are drawn from a particular distribution,
but rather focus on data independent solutions with worst-case guarantees.
Quantile summaries are a valuable tool, since they immediately provide
solutions for a range of related problems: estimating the cumulative
distribution function; answering rank queries; constructing equi-depth
histograms (where the number of items in each bucket must be
approximately equal); performing Kolmogorov-Smirnov statistical tests
\cite{Lall:15}; and balancing parallel computations~\cite{Tao:Lin:Xiao:13}.

Note that offline, with random access to the whole data set,
we can design an $\eps$-approximate quantile summary with storage cost just
$\left\lceil \frac{1}{2\eps}\right\rceil$. %
We simply select the $\eps$-quantile, the $3\eps$-quantile, the $5\eps$-quantile, 
and so on, and arrange them in a sorted array.
Queries can be answered by returning the $\phi$-quantile of this summary data set. 
Moreover, this is optimal, since there cannot be an interval
$I\subset [0,1]$ of size more than $2\eps$ such that there is no $\phi$-quantile
for any $\phi\in I$ in the quantile summary.

Building on the work of Munro and Paterson~\cite{munro80_exact_quantiles},
Manku, Rajagopalan, and Lindsay~\cite{manku98_QS}
designed a (streaming) quantile summary which uses space $\O(\oneovereps\cdot \log^2 \eps N)$,
although it relies on the advance knowledge of the stream length $N$.
Then, shaving off one log factor, Greenwald and Khanna~\cite{greenwald01_quantile_summaries}
gave an $\eps$-approximate quantile summary,
which needs just $\O(\oneovereps\cdot \log \eps N)$ words and does not require any advance information about the stream.
Both of these deterministic algorithms work for any universe with a total ordering as they
just need to do comparisons of the items. We call such an algorithm \emph{comparison-based}.

The question of whether one can design a 1-pass deterministic algorithm that runs
in a constant space for a constant $\eps$ has been open for a long time,
as highlighted by the first author
in 2006~\cite{cormode06_open_problem2}.
Following the above discussion, there is a trivial lower bound of
$\Omega(\oneovereps)$ that holds even offline.
This was the best known lower bound
until 2010 when Hung and Ting~\cite{hung10_qs_lower_bound} 
proved that a deterministic comparison-based algorithm needs space $\Omega(\oneovereps\cdot\log\oneovereps)$.

We significantly improve upon that result by showing that any deterministic
comparison-based data structure providing $\eps$-app\-ro\-xi\-ma\-te quantiles
needs to use $\Omega(\oneovereps\cdot \log \eps N)$ memory on the worst-case
input stream.
Our lower bound thus matches the Greenwald and Khanna's result, up to a constant factor,
and in particular, it rules out an algorithm running in space $f(\eps)\cdot o(\log N)$,
for any function $f$ that does not depend on $N$.
It also follows that a comparison-based data structure with $o(\oneovereps\cdot \log \eps N)$ memory 
must fail to provide a $\phi$-quantile for some $\phi\in [0, 1]$.
Using a standard reduction (appending more items to the end of the stream), 
this implies that there is no deterministic comparison-based streaming
algorithm that returns an $\eps$-approximate median and
uses $o(\oneovereps\cdot \log \eps N)$ memory.
Applying a reduction from~\cite{karnin16_optimal_rand_quantile_summaries}, this yields a lower bound of
$\Omega(\oneovereps\cdot \log \log \oneoverdelta)$
for any randomized comparison-based algorithm. 
We refer to Section~\ref{sec:conclusions}
for a discussion of this and other  corollaries of our result,
including a new lower bound for algorithms achieving a stronger relative-error guarantee.

\subsection{Overview and Comparison to Prior Bounds}

Let $\ds$ be a deterministic comparison-based quantile summary.
From a high-level point of view, we prove the space lower bound for $\ds$ by
constructing two streams $\pi$ and $\rho$ satisfying two opposing constraints:
On one hand, the behavior of
$\ds$ on these streams is the same, implying that the memory states
after processing $\pi$ and $\rho$ are the same, up to an order-preserving
renaming of the stored items. For this reason, $\pi$ and $\rho$ are
called \emph{indistinguishable}. On the other hand, the adversary introduces
as much uncertainty as possible. Namely, it makes the difference between the rank
of a stored item with respect to (w.r.t.) $\pi$ and the rank of the next stored item
w.r.t.\ $\rho$ as large as possible, where the rank of an item w.r.t.\ stream $\sigma$
is its position in the ordering of $\sigma$.
If this difference, which we call the ``gap'', is too large, then
$\ds$ fails to provide an $\eps$-approximate $\phi$-quantile for some $\phi\in [0,1]$.
The crucial part of our lower bound proof is to construct
the two streams in a way that yields a good trade-off between
the number of items stored by the algorithm and the largest
gap introduced.

While the previous lower bound of $\Omega(\oneovereps\cdot \log \oneovereps)$~\cite{hung10_qs_lower_bound}
is in the same computational model, and also works by creating
indistinguishable streams with as much uncertainty as possible,
our approach is substantially different.
Mainly, the construction by Hung and Ting~\cite{hung10_qs_lower_bound} is inherently sequential
as it works in $m \approx \oneovereps \log \oneovereps$ iterations
and appends $\O(m)$ items in each iteration to the streams constructed
(and moreover, up to $\O(m)$ new streams are created from each former stream in each iteration).
Thus, their construction produces (a large number of) indistinguishable streams of length
$\Theta\left(\left(\oneovereps \log \oneovereps\right)^2\right)$.
Furthermore, having the number of iterations equal to the number of items 
appended during each iteration (up to a constant factor) is crucial 
for the analysis in~\cite{hung10_qs_lower_bound}. 

In contrast, our construction is naturally specified in a
recursive way, and it 
produces just two indistinguishable streams of length $N$ for any $N = \Omega(\oneovereps)$.
For $N \approx \left(\oneovereps\right)^2$,
our lower bound of $\Omega(\oneovereps\cdot  \log \eps N)$ implies the previous 
one of $\Omega(\oneovereps\cdot  \log \oneovereps)$,
and hence for higher $N$, our lower bound is strictly stronger than the previous one.

The value in using a recursive construction is as follows:
The construction produces two indistinguishable streams of length $\oneovereps\cdot 2^k$ for an integer $k\ge 1$,
and we need to prove that the quantile summary $\ds$ must store at least $c\cdot \oneovereps\cdot k$
items while processing one of these streams, for a constant $c>0$.
The first half of the streams is constructed recursively,
so $\ds$ needs to store at least $c\cdot \oneovereps\cdot (k-1)$ items while processing the first half
of either of these two streams (using an induction on $k$).
If it already stores at least $c\cdot \oneovereps\cdot k$ items on the first half,
then we are done.
Otherwise, our inductive argument yields that there must a substantial uncertainty
introduced while processing the first half,
which we use in the recursive construction of the second half of the streams.
Then our aim will be to show that, while processing the second 
half, $\ds$ needs to store $c\cdot \oneovereps\cdot (k-1)$ items from the second half,
by induction, and $c\cdot \oneovereps$ items from the first half, by a simple bound.
Hence, it stores $c\cdot \oneovereps\cdot k$ items overall.
However, using the inductive argument on the second half brings some technical difficulties,
since the streams already contain items from the first half.
Our analysis shows a space lower bound, called the ``space-gap inequality'', that depends on the uncertainty introduced
on a particular part of the stream, and this inequality is amenable to a proof by induction.

\para{Final versus working space usage.}
We note that we prove a lower bound on the space usage of any deterministic comparison-based quantile summary while processing the input stream.
However, after processing the stream, the summary may be compressed to $O(1/\eps)$ space,
simply by processing the stream with smaller accuracy parameter $\eps' = \eps/2$
and then selecting $1/\eps'$ many $\eps'$-approximate quantiles, which form a final summary
that achieves uniform $\pm \eps\cdot N$ error.

Nevertheless, we show that on average over all steps,
any deterministic comparison-based quantile summary must use $\Omega(\oneovereps\cdot \log \eps N)$ memory words, which will be crucially needed for the relative-error lower bound.

\para{Organization of the paper.}
In Section~\ref{sec:compModel}, we start by describing the formal computational model
in which our lower bound holds
and formally stating our result. In Section~\ref{sec:indistinguishableStreams},
we introduce indistinguishable streams, and in Section~\ref{sec:construction}
we describe our construction. Then, in Section~\ref{sec:spaceGapInquality}
we inductively prove the crucial inequality between the space and the largest gap (the uncertainty),
which implies the lower bound. Finally, in Section~\ref{sec:conclusions}
we give corollaries of the construction and discuss related open problems.

\subsection{Related Work}

The Greenwald-Khanna (GK) algorithm~\cite{greenwald01_quantile_summaries}
is generally regarded as the best deterministic quantile summary.
The space bound of $O(\oneovereps \cdot \log \eps N)$ follows from a
somewhat involved proof, and it has been questioned whether this
approach could be simplified or improved.
Our work answers this second question in the negative,
while Gribelyuk \etal~\cite{GribelyukSWY24} recently provided a simpler proof of the optimal bound for a variant of GK.
For a known universe $U$ of bounded size, 
Shrivastava \etal~\cite{shrivastava04_bounded_universe_qs} designed a quantile summary q-digest
using $\O(\oneovereps\cdot \log |U|)$ words.
Note that their algorithm is 
not comparison-based and so the result is incomparable to the upper bound of $\O(\oneovereps\cdot\log \eps N)$.
We are not aware of any lower bound which holds for a known universe of bounded size,
apart from the trivial bound $\Omega(\oneovereps)$.
Recently, Gupta, Singhal, and Wu~\cite{GuptaSW24} designed a compressed variant of q-digest that uses only $\O(\oneovereps)$ memory words.

If we tolerate randomization, %
it is possible to design quantile summaries with space close to $\oneovereps$.
After a sequence of
improvements~\cite{manku99_sampling_QS,agarwal13_mergeable_summaries,luo16_quantiles_experimental,felber15_randomized_QS},
Karnin, Lang, and Liberty~\cite{karnin16_optimal_rand_quantile_summaries}
designed a randomized comparison-based quantile summary with space bounded by
$\O(\oneovereps\cdot \log\log \frac{1}{\eps \delta})$,
where $\delta$ is the probability of not returning an $\eps$-app\-ro\-xi\-ma\-te $\phi$-quantile
for some $\phi$.
They also provide a reduction to transform the deterministic
$\Omega(\oneovereps\cdot \log\oneovereps)$ lower bound into a randomized lower bound of $\Omega(\oneovereps\cdot \log\log \frac{1}{\delta})$ for $\delta < 1/N!$,
implying optimality of their approach in the comparison-based model
for an exponentially small $\delta$.
We discuss further how the deterministic and randomized
lower bounds relate in Section~\ref{sec:conclusions}.

Luo \etal~\cite{luo16_quantiles_experimental} compared quantile summaries
experimentally and also provided a simple randomized algorithm with a good practical performance.
This paper studies not only streaming algorithms for insertion-only streams (i.e., the cash register model),
but also for  turnstile streams, in which items may depart.
Note that any algorithm for turnstile streams inherently relies on the bounded
size of the universe.
We refer the interested reader to the survey of Greenwald and Khanna~\cite{greenwald16_quantiles_survey}
for a description of both deterministic and randomized algorithms, together
with algorithms for turnstile streams, the sliding window model, and distributed algorithms.

Other results arise when relaxing the requirement for correctness
under adversarial order to assuming that the input arrives in a random
order. 
For random-order streams, Guha and McGregor~\cite{guha09_randomoder_quantiles} studied algorithms for exact
and approximate selection of quantiles.
Among other things, they gave an algorithm
for finding the exact $\phi$-quantile in space $\polylog(N)$ using $\O(\log \log N)$
passes over a random-order stream, while with $\polylog(N)$ memory 
we need to do $\Omega(\log N / \log \log N)$ passes on the worst-case stream.
For the one-pass random-order setting, they design a logarithmic-space algorithm which, for a given $k$,
returns an item of rank $k\pm O(\sqrt{k}\cdot \log n\cdot \log \delta^{-1})$ with probability at least $1-\delta$.
The Shifting Sands algorithm~\cite{McGregor:Valiant:12} reduces the
magnitude of the error for the median from $O(n^{1/2})$ to $O(n^{1/3})$. 
Since our lower bound relies on carefully constructing an adversarial
input sequence, it does not apply to this random order model.

\section{Computational Model}\label{sec:compModel}

We present our lower bounds in a comparison-based model of
computation, in line with prior work, most notably that of 
Hung and Ting~\cite{hung10_qs_lower_bound}.
We assume that the items forming the input stream are drawn from a
totally ordered universe $U$, 
about which the algorithm has no further information.
The only allowed operations on items are to perform an equality test
or a comparison of two given items.
This restriction specifically rules out manipulations which try to combine
multiple items into a single storage location, or replace a group of
items with an ``average'' representative.  
We assume that the universe is unbounded and continuous in the sense
that any non-empty open interval contains an unbounded number of
items.
This fact is relied on in our proof to be able to draw new elements
falling between any previously observed pair. 
An example of such a universe is a large enough set of long incompressible strings, ordered lexicographically
(where the continuous assumption may be achieved by making the strings even longer).

Let $\ds$ be a deterministic data structure for processing a stream of items,
i.e., a sequence of items arriving one by one.
We make the following assumptions about the memory contents of $\ds$.
The memory used by $\ds$ will contain some items from the stream,
each considered to occupy one memory cell, and some other information
which could include lower and upper bounds on the ranks of stored items, counters, etc.
However, we assume that the memory does not contain the result
of any operation applied on any $k\ge 1$ items from the stream,
apart from a comparison and the equality test (as other operations are prohibited by our model).
Thus, we can partition the \emph{memory state} into a pair $M = (I, G)$,
where $I$ is the \emph{item array} for storing items from the
input, indexed from $1$,
and there are no items stored in the \emph{general memory} $G$.

We give our lower bound on the memory size only in terms of $|I|$, the
number of items stored, and ignore the size of $G$.
For simplicity, we assume without loss of generality that the contents of $I$ are sorted non-decreasingly, i.e.,
$I[1]\le I[2]\le \cdots$.
If this were not case, we could equivalently apply an in-place sorting
algorithm after processing each item, while the information
potentially encoded in the former ordering of $I$ can be retained in
$G$ whose size we do not measure.
Finally, we can assume that the minimum and maximum elements of the
input stream are always maintained, with at most a constant additional
storage space.

Summarizing, we have the following definition.

\begin{definition}\label{def:comparisonBased}
We say that a quantile summary $\ds$ is \emph{comparison-based} 
if the following holds:
\begin{enumerate}[label=(\roman*),nosep]
\item $\ds$ does not perform any operation on items from the stream,
apart from a comparison and the equality test.
\item The memory of $\ds$ is divided into the \emph{item array} $I$, which stores only items
that have already occurred in the stream (sorted non-decreasingly),
and \emph{general memory} $G$, which does not contain any item identifier.
Furthermore, once an item is removed from $I$, it cannot be added back to $I$, unless
it appears in the stream again.
\item Given the $i$-th item $a_i$ from the input stream, \label{itm:compArrival}
the computation of $\ds$ is determined solely by the results of
comparisons between $a_i$ and $I[j]$, for $j= 1,\dots,|I|$,
the number $|I|$ of items stored, and the contents of the general memory $G$.
\item Given a quantile query $0\le \phi\le 1$, the computation of $\ds$ is determined solely by 
the number of items stored ($|I|$), and the contents of the general memory $G$.
Moreover, $\ds$ can only return one of the items stored in $I$.
\end{enumerate}
\end{definition}

Note that quantile summaries satisfying Definition~\ref{def:comparisonBased}
include the Greenwald-Khanna algorithm~\cite{greenwald01_quantile_summaries}
as well as many other deterministic~\cite{greenwald16_quantiles_survey,manku98_QS,munro80_exact_quantiles} and randomized quantile summaries~\cite{manku99_sampling_QS,agarwal13_mergeable_summaries,
luo16_quantiles_experimental,felber15_randomized_QS,karnin16_optimal_rand_quantile_summaries}.
On the other hand, the q-digest structure~\cite{shrivastava04_bounded_universe_qs}
is not comparison-based, since it relies on building a binary tree over $U$
and since it can actually return an item that did not occur in the stream,
neither of which is allowed by Definition~\ref{def:comparisonBased}.
Thus, our lower bound does not apply to this algorithm and indeed, for $N\gg |U|$,
its space requirement of $\O(\oneovereps\cdot \log |U|)$ words may be substantially smaller
than $\Omega(\oneovereps\cdot \log \eps N)$.

We are now ready to state our main result formally.

\begin{theorem}\label{thm:main}
For any $0 < \eps < \frac1{32}$,
there is no deterministic comparison-based $\eps$-approximate quantile summary
which stores $o(\oneovereps\cdot \log \eps N)$ items on any input stream of length $N$.
Furthermore, the same lower bound holds for the average space usage, where the average is taken over all stream updates.
\end{theorem}

Fix the approximation guarantee $0 < \eps < \frac1{32}$ and assume for simplicity that 
$\oneovereps$ is an integer.
Let $\ds$ be a fixed deterministic comparison-based $\eps$-approximate quantile summary.
We show that for any integer $k\ge 1$, data structure $\ds$ needs to store 
at least $\Omega(\oneovereps\cdot k)$ items from some input stream of length $N_k := \oneovereps\cdot 2^k$
(thus, we have $\log_2 \eps N_k = k$).

\para{Notation and conventions.}
We assume that $\ds$ starts with an empty memory state $M_\emptyset = (I_\emptyset, G_\emptyset)$
with $|I_\emptyset| = 0$.
For an item $a$, let $\ds(M, a)$ be the resulting memory state
after processing item $a$ if the memory state was $M$ before processing $a$. Moreover,
for a stream $\sigma = a_1, \dots, a_N$, let
$\ds(M, \sigma) = \ds(\dots \ds(\ds(M, a_1), a_2), \dots, a_N)$ be the memory state after processing stream $\sigma$.
For brevity, we use $(I_\sigma, G_\sigma) = \ds(M_\emptyset, \sigma)$,
or just $I_\sigma$ for the item array after processing stream $\sigma$.

When referring to the order of a set of items, we always mean the non-decreasing order.
For an item $a$ in stream $\sigma$, let $\rank_\sigma(a)$ be the rank of $a$ in the order of $\sigma$,
i.e., the position of $a$ in the ordering of $\sigma$. 
In our construction, all items in each of the streams will be
distinct, thus $\rank_\sigma(a)$ is well-defined and equal to one more than the number of items
that are strictly smaller than $a$.

\section{Indistinguishable Streams}\label{sec:indistinguishableStreams}
We start by defining an equivalence of memory states of the fixed summary $\ds$,
which captures their equality up to renaming stored items.
Then, we give the definition of indistinguishable streams.

\begin{definition}\label{def:equivalentStates}
Two memory states $(I_1,G_1)$ and $(I_2,G_2)$ are said to be \emph{equivalent} if (i) $|I_1| = |I_2|$,
i.e., the number of items stored is the same, and
(ii) $G_1 = G_2$.
\end{definition}

\begin{definition}\label{def:indistinguishableStreams}
We say that two streams $\pi = a_1 a_2 \dots a_N$ and $\rho = b_1 b_2 \dots b_N$ of length $N$
are \emph{indistinguishable} for $\ds$
if (1) the final memory states $(I_\pi, G_\pi)$ and $(I_\rho, G_\rho)$
are equivalent, and (2) for any $1\le i\le |I_\pi| = |I_\rho|$,
there exists $1\le j\le N$ such that both $I_\pi[i] = a_j$ and $I_\rho[i] = b_j$.
\end{definition}

We remark that condition~(2) is implied by~(1)
if the positions of stored items in the stream are retained in the general memory,
but we make this property explicit as we shall use it later.
In the following, let $\pi$ and $\rho$ be two indistinguishable streams with $N$ items.
Note that, after $\ds$ processes one of $\pi$ and $\rho$
and receives a quantile query $0\le \phi\le 1$,
$\ds$ must return the $i$-th item of array $I$ for some $i$, regardless of whether the stream was $\pi$ or $\rho$.
This follows, since $\ds$ can make its decisions based on the values
in $G$, which are identical in both cases, and operations on 
values in $I$, which are indistinguishable under the comparison-based
model. 

For any $k\ge 1$, our general approach is to  recursively construct two
streams $\pi_k$ and $\rho_k$ of length $N_k$ that satisfy two
constraints set in opposition to each other:
They are indistinguishable for $\ds$, but at the same time, for some $j$, the rank of $I_\pi[j]$ in stream $\pi$
and the rank of $I_\rho[j+1]$ in stream $\rho$ 
are as different as possible --- we call this difference the ``gap''.
The latter constraint is captured by the following definition.

\begin{definition}
We define the \emph{largest gap} between indistinguishable streams
$\pi$ and $\rho$ (for $\ds$) as 
\begin{align*}\gap(\pi, \rho) = \max_{1\le i < |I_\pi|}\,
	 \max\big(\, & \rank_\pi(I_\pi[i+1]) - \rank_\rho(I_\rho[i]),\, \\
         & \rank_\rho(I_\rho[i+1]) - \rank_\pi(I_\pi[i]) \,\big)\,.
         \end{align*}
\end{definition}

As we assume that $I$ is sorted, $I_\pi[i+1]$ is the next
stored item after $I_\pi[i]$ in the ordering of $I_\pi$.
In the construction in Section~\ref{sec:construction},
we will also ensure that $\rank_\pi(I_\pi[i]) \le \rank_\rho(I_\rho[i])$ for any $1\le i\le |I_\pi|$.
Hence, we can simplify to \[\gap(\pi, \rho) = \max_i \rank_\rho(I_\rho[i+1]) -
\rank_\pi(I_\pi[i]).\]
We also have that 
$\gap(\pi, \rho)\ge \gap(\pi, \pi)$, which follows, since for any $i$
it holds by construction that \[\rank_\rho(I_\rho[i+1]) - \rank_\pi(I_\pi[i]) \ge
\rank_\pi(I_\pi[i+1]) - \rank_\pi(I_\pi[i]).\]

\begin{lemma}\label{lem:indistStreamsGapUB}
If $\ds$ is an $\eps$-approximate quantile summary, then $\gap(\pi, \rho) \le 2\eps N$.
\end{lemma}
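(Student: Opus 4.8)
The plan is to argue by contradiction: suppose $\gap(\pi,\rho) > 2\eps N$. By definition of the largest gap, there is an index $i$ with $1\le i < |I_\pi|$ such that (without loss of generality) $\rank_\rho(I_\rho[i+1]) - \rank_\pi(I_\pi[i]) > 2\eps N$; the other case is symmetric, swapping the roles of $\pi$ and $\rho$. Set $a = I_\pi[i]$ and $b = I_\rho[i+1]$, and let $r = \rank_\pi(a)$, so that $\rank_\rho(b) > r + 2\eps N$. The idea is to exhibit a quantile query $\phi$ that $\ds$ cannot answer correctly on at least one of the two streams, using that $\ds$ must return the same index of the item array in either case (as noted after Definition~\ref{def:indistinguishableStreams}).

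First I would pick a target rank $k$ strictly between $r + \eps N$ and $\rank_\rho(b) - \eps N$; such an integer exists because the interval has length more than $2\eps N - 2\eps N = 0$, and more carefully, $\rank_\rho(b) - \eps N - (r + \eps N) = \rank_\rho(b) - r - 2\eps N > 0$. Now consider the query $\phi = k/N$. On either stream, a correct answer must be an item whose rank (in that stream) lies in $[k - \eps N, k + \eps N] \subseteq (r, \rank_\rho(b))$. Since $\ds$ returns item $I[j]$ for a fixed index $j$ regardless of the stream, I would split into cases according to where $j$ sits relative to $i$. If $j \le i$, then on stream $\pi$ the returned item is $I_\pi[j] \le I_\pi[i] = a$, so its rank in $\pi$ is at most $r$, which is below the admissible window --- contradiction. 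If $j \ge i+1$, then on stream $\rho$ the returned item is $I_\rho[j] \ge I_\rho[i+1] = b$, so its rank in $\rho$ is at least $\rank_\rho(b)$, again outside the window --- contradiction. Either way $\ds$ fails to return an $\eps$-approximate $\phi$-quantile on one of the streams, contradicting the hypothesis that $\ds$ is an $\eps$-approximate quantile summary. Hence $\gap(\pi,\rho)\le 2\eps N$.

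The only mild subtlety is making the rank arithmetic line up with the ceiling/floor in the definition of $\eps$-approximate $\phi$-quantile (the query asks for the $\lfloor \phi N\rfloor$-th smallest item, with tolerance $\eps N$); choosing $k$ with a little room to spare inside the open interval $(r+\eps N,\ \rank_\rho(b)-\eps N)$ absorbs any off-by-one issues. I do not expect a genuine obstacle here: the argument is essentially the observation that a gap of size $g$ in the stored items forces an uncovered rank-interval of length roughly $g$, and a window of $2\eps N$ is exactly the largest interval of ranks that a single returned item can cover. The case analysis on $j$ versus $i$ is the crux, and it is short.
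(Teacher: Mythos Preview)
Your proposal is correct and follows essentially the same approach as the paper: assume the gap exceeds $2\eps N$, pick a query $\phi$ whose target rank sits in the middle of the offending gap, and use the fact that $\ds$ must return the same index $j$ on both streams to derive a contradiction via the case split $j\le i$ versus $j\ge i+1$. The paper chooses $\phi N$ to be the midpoint $\tfrac12(\rank_\rho(I_\rho[i+1])+\rank_\pi(I_\pi[i]))$ rather than an integer $k$ in the open interval, and it does not explicitly invoke the ``other case is symmetric'' for the two branches of the $\max$ in the gap definition, but these are cosmetic differences; the substance of the argument is identical.
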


\begin{proof}
Suppose that $\gap(\pi, \rho) > 2\eps N$.
We show that $\ds$ fails to provide an $\eps$-approximate $\phi$-quantile for some $0\le \phi\le 1$,
which is a contradiction.
Namely, because $\gap(\pi, \rho) > 2\eps N$, there is $1\le i < |I_\pi| = |I_\rho|$
such that $\rank_\rho(I_\rho[i+1]) - \rank_\pi(I_\pi[i]) > 2\eps N$.
Let $\phi$ be such that
\[\phi\cdot N = \half \big(\rank_\rho(I_\rho[i+1]) + \rank_\pi(I_\pi[i])\big),\] i.e.,
$\phi\cdot N$ is in the middle of the ``gap''. 
Since streams $\pi$ and $\rho$ are indistinguishable and $\ds$ is comparison-based,
given query $\phi$, $\ds$ must return the $j$-th item of item array $I$ for some $j$,
regardless of whether the stream is $\pi$ or $\rho$.
Observe that if $j\le i$ and the input stream is $\pi$,
item $I_\pi[j]$ does not meet the requirements to be an $\eps$-approximate $\phi$-quantile of items in $\pi$.
Otherwise, when $j > i$, then item $I_\rho[j]$ is not an $\eps$-approximate $\phi$-quantile of stream $\rho$.
In either case, we get a contradiction.
\end{proof}

As the minimum and maximum elements of stream $\pi$ are in $I_\pi$, it holds that $\gap(\pi, \pi) \ge N / |I_\pi|$,
thus the number of stored items is at least $N / \gap(\pi, \pi) \ge N / \gap(\pi, \rho) \ge \frac{1}{2\eps}$,
where the last inequality is by Lemma~\ref{lem:indistStreamsGapUB}.
This gives an initial lower bound of $\Omega(\oneovereps)$ space.
Our construction of adversarial inputs for $\ds$ in the next section
increases this bound. 

\section{Recursive Construction of Indistinguishable Streams}\label{sec:construction}

\subsection{Intuition}
We will define our construction of the two streams $\pi$ and $\rho$ using a
recursive adversarial procedure for generating items into the two streams.
This procedure tries to make the gap as large as possible,
but ensures that they are indistinguishable.
It helps to consider  the recursion tree.
This tree is a full binary tree with $k$ levels, with the root at level $1$
and thus with $2^{k-1}$ leaves at level $k$. 
In each leaf, $2/\eps$ items are appended to the stream, while
the adversary generates no items in internal (i.e., non-leaf) nodes.
The construction performs the in-order traversal of the recursion tree.

One of the key concepts needed is the maintenance of two open
intervals during the construction, one for stream $\pi$, denoted $(\ell_\pi, r_\pi)$, and the other for stream $\rho$, denoted $(\ell_\rho, r_\rho)$.
Initially, these intervals cover the whole universe,
but they are refined in each internal node of the recursion tree.
More precisely, consider the execution in an internal node $v$ at level $i$ of the recursion tree.
We first execute the left subtree, which generates $\oneovereps\cdot 2^{i-1}$
items into the streams inside the current intervals.
We then identify the largest gap inside the current intervals w.r.t.\ item arrays of $\ds$
after processing streams $\pi$ and $\rho$ (more precisely, after $\ds$
has completed processing the prefixes of $\pi$ and $\rho$
constructed so far).
Having the largest gap, we identify new open intervals
for $\pi$ and $\rho$ in ``extreme regions'' of this gap, so that they do not
contain any item so far.
We explain this subroutine in greater detail when describing procedure \refineintervals{}.
We choose these intervals so that indistinguishability of the streams is preserved,
while the rank difference between the two streams (the uncertainty) is maximized.
The execution of the procedure in node $v$ ends by executing the right subtree of $v$,
which generates a further  $\oneovereps\cdot 2^{i-1}$
items into the new, refined intervals of the two streams.
Recall that we consider the universe of items to be continuous, namely, 
that we can always generate sufficiently many items within both of the new intervals.

\subsection{Notation}
For an item $a$ in stream $\sigma$, let $\next(\sigma, a)$ be the next item 
in the ordering of $\sigma$, i.e., the smallest item in $\sigma$ that is larger than $a$
(we never invoke $\next(\sigma, a)$ when $a$ is the largest item in $\sigma$).
Similarly, for an item $b$ in stream $\sigma$, let $\prev(\sigma, b)$ be the previous item in 
the ordering of $\sigma$ (left undefined for the smallest item in $\sigma$).
Note that $\next(\sigma, a)$ or $\prev(\sigma, b)$ may well \emph{not} be stored by $\ds$.

For an interval $(\ell, r)$ of items and an array $I$ of items, we use 
$I^{(\ell, r)}$ to denote the restriction of $I$ to $(\ell, r)$, enclosed by $\ell$ and $r$.
That is, $I^{(\ell, r)}$ is the array of items $\ell, I[i], I[i+1], \dots, I[j], r$,
where $i$ and $j$ are the minimal and maximal indexes of an item in $I$
that falls within the interval $(\ell, r)$, respectively.
Items in $I^{(\ell, r)}$ are taken to be sorted and indexed from 1.
Recall also that by our convention, $I_\sigma$ is the item array after processing some stream $\sigma$.

\begin{algorithm}[t]
\caption{Adversarial procedure \refineintervals{}}
\label{alg:refineintervals}
\begin{algorithmic}[1]
\Require{Streams $\pi$ and $\rho$ and intervals $(\ell_\pi, r_\pi)$ and $(\ell_\rho, r_\rho)$ of items
such that:
\begin{enumerate}[nosep,label=(\roman*)]
\item $\pi$ and $\rho$ are indistinguishable, and
\item only the last $N'\ge 2$ items from $\pi$ and $\rho$ are from intervals $(\ell_\pi, r_\pi)$ and $(\ell_\rho, r_\rho)$, respectively
\end{enumerate}
}
\Ensure{Intervals $(\alpha_\pi, \beta_\pi)\subset (\ell_\pi, r_\pi)$ and
	$(\alpha_\rho, \beta_\rho)\subset (\ell_\rho, r_\rho)$}
	\State{$I'_{\pi} \assign I^{(\ell_\pi, r_\pi)}_{\pi}$ and $I'_{\rho} \assign I^{(\ell_\rho, r_\rho)}_{\rho}$}
		\label{algLn:restrItemArrays}
	\State{$\displaystyle{i\assign \argmax_{1\le i < |I'_{\rho}|} \rank_{\rho}(I'_{\rho}[i+1]) - \rank_{\pi}(I'_{\pi}[i])}$}
\newline
	\Comment{Position of the largest gap in intervals $(\ell_\pi, r_\pi)$ and $(\ell_\rho, r_\rho)$}
		\label{algLn:gapDef}
	\State{$(\alpha_\pi, \beta_\pi) \assign \big(I'_{\pi}[i], \next(\pi, I'_{\pi}[i])\big)$}
		\Comment{New interval for $\pi$}
		\label{algLn:intervalForPi}
	\State{$(\alpha_\rho, \beta_\rho) \assign \big(\prev(\rho, I'_{\rho}[i+1]), I'_{\rho}[i+1]\big)$}
		\Comment{New interval for $\rho$}
		\label{algLn:intervalForRho}
	\State{\Return $(\alpha_\pi, \beta_\pi)$ and $(\alpha_\rho, \beta_\rho)$}
\end{algorithmic}
\end{algorithm}

\begin{figure}[t]
\centerline{\includegraphics[scale=0.8]{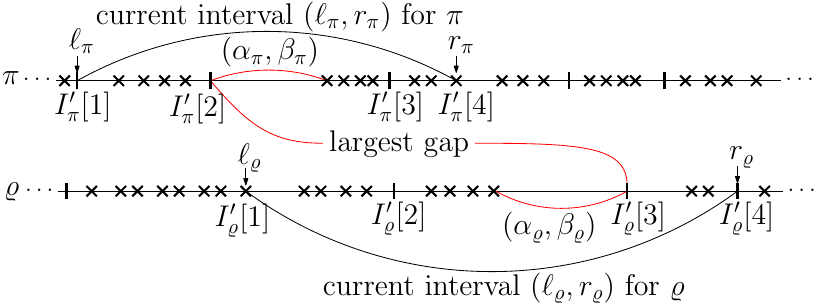}}
\caption{An illustration of the largest gap computation.}
\label{fig:gap}
\end{figure}

\subsection{Procedure \refineintervals{}}
We next describe our procedure to find the largest gap and refine the intervals,
defined in Pseudocode~\ref{alg:refineintervals}.
It takes as input indistinguishable streams $\pi$ and $\sigma$
and two open intervals $(\ell_\pi, r_\pi)$ and $(\ell_\rho, r_\rho)$ of the universe,
such that intervals $(\ell_\pi, r_\pi)$ and $(\ell_\rho, r_\rho)$ contain 
only the last $N'$ items from $\pi$ and $\rho$, respectively, for some $N'\ge 2$.
Note that $I'_{\pi}$ and $I'_{\rho}$ are the item arrays of $\ds$ for $\pi$ and $\rho$
restricted to the intervals $(\ell_\pi, r_\pi)$ and $(\ell_\rho, r_\rho)$, respectively, as defined above.
In these restricted arrays we find the largest gap (in line~\ref{algLn:gapDef}),
which is determined by the largest rank difference of consecutive items in the two arrays.
Finally, in lines~\ref{algLn:intervalForPi} and~\ref{algLn:intervalForRho}, we define
new, refined intervals in the extreme regions of the gap.
To be  precise, the new open interval for $\pi$ is between item $I'_{\pi}[i]$ (whose rank is used
to determine the largest gap) and the next item after $I'_{\pi}[i]$ in the ordering of $\pi$, i.e., $\next(\pi, I'_{\pi}[i])$.
The new open interval for $\rho$ is defined in a similar way:
It is between item $I'_{\rho}[i+1]$ (used to determine the largest gap)
and the item that precedes it in the ordering of $\rho$, i.e., $\prev(\rho, I'_{\rho}[i+1])$.

\medskip

In Figure~\ref{fig:gap} we give an illustration.
In this figure, the items in the streams are real numbers and we
depict them on the real line, the top one for $\pi$ and the bottom one for $\rho$.
Each item is represented either by a short line segment if it is stored in the item array,
or by a cross otherwise (indicating that it has been ``forgotten'' by $\ds$).
The procedure looks for the largest gap only within the current intervals
$(\ell_\pi, r_\pi)$ and $(\ell_\rho, r_\rho)$.
The ranks of items in the restricted item arrays (i.e.,
disregarding items outside the current intervals) can be verified to be
$1, 6, 11,$ and $14$ w.r.t.\ both streams.
(Note that $r_\pi$ is the last item in the restricted item array $I'_\pi$, even though
it was discarded from the whole item array $I_\pi$ by the algorithm, and similarly
for $\ell_\rho$ and $I'_\rho$.)
Thus the largest gap size is $5$ items,
and is found between the second item in the restricted item array $I'_\pi$ and the third item in $I'_\rho$,
as highlighted in the figure.
In this example, there is another, equal sized, gap between the first and second item in these arrays;
ties can be broken arbitrarily. 
The new intervals in the extreme regions of the largest gap
are depicted as well.
\qed

\medskip

\footnotetext{The minimum over an empty set is defined
	arbitrarily to be $\infty$.} 

\begin{algorithm*}[t]
\caption{Adversarial procedure \advstrat{}}
\label{alg:advstrat}
\begin{algorithmic}[1]
\Require{Integer $k\ge 1$, streams $\pi$ and $\rho$, 
	and intervals $(\ell_\pi, r_\pi)$ and $(\ell_\rho, r_\rho)$ of items such that:
	\begin{enumerate}[nosep,label=(\roman*)]
	\item $\pi$ and $\rho$ are indistinguishable, 
	\item $\pi$ contains no item from $(\ell_\pi, r_\pi)$ and $\rho$ contains no item from $(\ell_\rho, r_\rho)$, and
	\item for any $a\in (\ell_\pi, r_\pi)$ and $b\in (\ell_\rho, r_\rho)$, it holds that
		$\min \{ i | a \le I_{\pi}[i] \} = \min \{ i | b \le I_{\rho}[i] \}$\footnotemark
	\end{enumerate}
}
\Ensure{Streams $\pi'' = \pi \pi_k$ and $\rho'' = \rho \rho_k$, where $\pi_k$ and $\rho_k$ are substreams with $\oneovereps\cdot 2^k$ items
from $(\ell_\pi, r_\pi)$ and $(\ell_\rho, r_\rho)$, respectively}
\If{k = 1} \Comment{Leaf node of the recursion tree}
	\State{$\pi'' \assign$ stream $\pi$ followed by $2 / \eps$ items from interval $(\ell_\pi, r_\pi)$, in order }
	\State{$\rho'' \assign$ stream $\rho$ followed by $2 / \eps$ items from interval $(\ell_\rho, r_\rho)$, in order}
	\State{\Return{} streams $\pi''$ and $\rho''$}
\Else	\Comment{Internal node of the recursion tree}
	\State{$(\pi', \rho') \assign \advstrat{}\big(k - 1, \pi, \rho, (\ell_\pi, r_\pi), (\ell_\rho, r_\rho)\big)$}
		\label{algLn:stage1}
	\State{$(\alpha_\pi, \beta_\pi), (\alpha_\rho, \beta_\rho) \assign \refineintervals(\pi', \rho', (\ell_\pi, r_\pi), (\ell_\rho, r_\rho))$}
		\label{algLn:refineIntervals}
	\State{\Return{} $(\pi'', \rho'') \assign \advstrat{}\big(k - 1, \pi', \rho', (\alpha_\pi, \beta_\pi), (\alpha_\rho, \beta_\rho))\big)$}
		\label{algLn:stage2}
\EndIf
\end{algorithmic}
\end{algorithm*}

We claim that in the \refineintervals{} procedure $|I'_{\pi}| = |I'_{\rho}|$, which implies that the largest gap in line~\ref{algLn:gapDef}
is well-defined. Let $\pi = a_1 \dots a_{N}$ and $\rho = b_1 \dots b_{N}$ be the items in
streams $\pi$ and $\rho$, respectively. Since streams $\pi$ and $\rho$ are indistinguishable,
condition~(2) in Definition~\ref{def:indistinguishableStreams} implies that
for any $1\le i\le |I_{\pi}| = |I_{\rho}|$ (where $I_{\pi}$ and $I_{\rho}$ are the full item arrays),
there exists $1\le j\le N$ such that both $I_{\pi}[i] = a_j$ and $I_{\rho}[i] = b_j$.
As only the last $N'$ items of $\pi$ and of $\sigma$ are from intervals $(\ell_\pi, r_\pi)$ and $(\ell_\rho, r_\rho)$, respectively,
we %
 obtain that the restricted item arrays $I'_{\pi} = I^{(\ell_\pi, r_\pi)}_{\pi}$ and
$I'_{\rho} = I^{(\ell_\rho, r_\rho)}_{\rho}$ must have the same size,
proving the claim.

Finally, we show two properties that will be useful later
and follow directly from the definition of the new intervals.

\begin{observation}\label{obs:refineIntervalsOutput}
For intervals $(\alpha_\pi, \beta_\pi)$ and $(\alpha_\rho, \beta_\rho)$
returned by $\refineintervals\,(\pi, \rho, (\ell_\pi, r_\pi), (\ell_\rho, r_\rho))$,
it holds that 
\begin{enumerate}[nosep,label=(\roman*)]
\item $\pi$ contains no item in the interval $(\alpha_\pi, \beta_\pi)$ and
$\rho$ contains no item in the interval $(\alpha_\rho, \beta_\rho)$; and
\item for any $a\in (\alpha_\pi, \beta_\pi)$ and $b\in (\alpha_\rho, \beta_\rho)$
we have that $\min \{ i | a \le I_{\pi}[i] \} = \min \{ i | b \le
I_{\rho}[i] \}$.
\end{enumerate}
\end{observation}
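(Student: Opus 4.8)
The plan is to get part~(i) straight from the definitions of $\next$ and $\prev$, and to reduce part~(ii) to locating the two new interval endpoints inside the \emph{full} item arrays $I_\pi, I_\rho$. Write $m$ for the index chosen in line~\ref{algLn:gapDef}, so that $(\alpha_\pi,\beta_\pi)=(I'_\pi[m],\next(\pi,I'_\pi[m]))$ and $(\alpha_\rho,\beta_\rho)=(\prev(\rho,I'_\rho[m+1]),I'_\rho[m+1])$. Part~(i) is then immediate: by definition the open interval between an item of $\pi$ and its $\pi$-successor contains no item of $\pi$, so $\pi$ has no item in $(\alpha_\pi,\beta_\pi)$, and symmetrically, by the definition of $\prev$, $\rho$ has no item in $(\alpha_\rho,\beta_\rho)$.

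For part~(ii) I would first strengthen the already-noted equality $|I'_\pi|=|I'_\rho|$ to: the set of indices $t$ with $I_\pi[t]\in(\ell_\pi,r_\pi)$ equals the set of indices $t$ with $I_\rho[t]\in(\ell_\rho,r_\rho)$. Indeed, condition~(2) of Definition~\ref{def:indistinguishableStreams} provides, for each such $t$, a common stream position $j$ with $I_\pi[t]=a_j$ and $I_\rho[t]=b_j$, and --- since only the last $N'$ items of each stream lie in the corresponding interval --- we have $I_\pi[t]\in(\ell_\pi,r_\pi)$ iff $j>N-N'$ iff $I_\rho[t]\in(\ell_\rho,r_\rho)$. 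Letting $c$ be the resulting common number of stored items that are $\le\ell_\pi$ (equivalently, $\le\ell_\rho$), this yields $I'_\pi[t]=I_\pi[c+t-1]$ and $I'_\rho[t]=I_\rho[c+t-1]$ for all $t$ with $2\le t\le|I'_\pi|-1$, with the \emph{same} offset $c$ in both arrays.

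Now fix $a\in(\alpha_\pi,\beta_\pi)$ and $b\in(\alpha_\rho,\beta_\rho)$. By part~(i), every item of $I_\pi$ is $\le\alpha_\pi<a$ or $\ge\beta_\pi>a$, so $\min\{t:a\le I_\pi[t]\}$ is independent of $a$ and equals one plus the number of items of $I_\pi$ that are $\le\alpha_\pi$; that number is $c+m-1$ --- it is $c$ when $m=1$ (then $\alpha_\pi=\ell_\pi$) and $c+m-1$ when $m\ge2$ (then $\alpha_\pi=I_\pi[c+m-1]$) --- so $\min\{t:a\le I_\pi[t]\}=c+m$. Symmetrically, every item of $I_\rho$ is $\le\alpha_\rho<b$ or $\ge\beta_\rho>b$, so $\min\{t:b\le I_\rho[t]\}$ equals one plus the number of items of $I_\rho$ that are $<\beta_\rho$; that number is again $c+m-1$ --- when $m+1\le|I'_\rho|-1$ we have $\beta_\rho=I_\rho[c+m]$, and when $m+1=|I'_\rho|$ we have $\beta_\rho=r_\rho$, below which sit the $c$ items of $I_\rho$ outside the window on the left together with its $|I'_\rho|-2=m-1$ items inside the window --- so $\min\{t:b\le I_\rho[t]\}=c+m$ as well. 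The two minima coincide, which is exactly~(ii).

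The main obstacle, such as it is, lies in the two boundary cases --- $m=1$, where $\alpha_\pi$ is the interval endpoint $\ell_\pi$ rather than a stored item, and $m+1=|I'_\rho|$, where $\beta_\rho$ is the endpoint $r_\rho$ --- together with the index-alignment claim of the second paragraph, which is what makes the two counts collapse to the common value $c+m-1$ and hence the two minima to $c+m$. Once these are pinned down, everything else is routine arithmetic, and the observation genuinely "follows directly from the definition of the new intervals".
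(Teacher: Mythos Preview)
Your argument is correct. The paper itself does not give a proof of this observation at all; it merely says that the two properties ``follow directly from the definition of the new intervals'' and leaves the details to the reader. Your write-up is therefore more detailed than the paper's treatment, and it isolates exactly the point that needs care: the index-alignment claim that the set of indices $t$ with $I_\pi[t]\in(\ell_\pi,r_\pi)$ coincides with the set of indices $t$ with $I_\rho[t]\in(\ell_\rho,r_\rho)$, which is what lets you use the \emph{same} offset $c$ on both sides and collapse both minima to $c+m$. Your handling of the two boundary cases ($m=1$ and $m+1=|I'_\rho|$) is also correct; in fact the formula $c+m-1$ for the count works uniformly in both cases, so the case split is only for clarity.
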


\subsection{Recursive Adversarial Strategy}
Pseudocode~\ref{alg:advstrat} gives the formal description of the recursive adversarial strategy.
The procedure \advstrat{} takes as input the level of recursion $k$ and
the indistinguishable streams $\pi$ and $\rho$ constructed so far.
It also takes two open intervals $(\ell_\pi, r_\pi)$ and $(\ell_\rho, r_\rho)$
of the universe such that so far
there is no item from interval $(\ell_\pi, r_\pi)$ in stream $\pi$
and similarly, $\rho$ contains no item from $(\ell_\rho, r_\rho)$.

The initial call of the strategy for some integer $k$ is $\advstrat$
$(k, \emptyset,$ $\emptyset, (-\infty, \infty),$ $(-\infty, \infty))$,
where $\emptyset$ stands for the empty stream and $-\infty$ and $\infty$ represent
the minimum and maximum items in $U$, respectively.
Note that the assumptions on the input for the initial call are satisfied.
The strategy for $k=1$ is trivial: We just append  $2/\eps$ arbitrary
items from $(\ell_\pi, r_\pi)$ to $\pi$ and any $2/\eps$ items from $(\ell_\rho, r_\rho)$
to $\rho$, in the same order for both streams.
For $k > 1$, we first use \advstrat{} recursively for level $k-1$.
Then, we apply procedure \refineintervals{} on the streams constructed
after the first recursive call, and we get two new intervals on the extreme regions of the largest gap
inside the current intervals.
Finally, we use \advstrat{} recursively for $k-1$ in these new intervals.
Below, we prove that the assumptions on input for these two recursive calls and
for \refineintervals{} are satisfied.

\begin{figure*}[h]
\vspace{10mm}
  \centering
  \begin{subfigure}{\linewidth}
\centering{\includegraphics[width=\linewidth]{./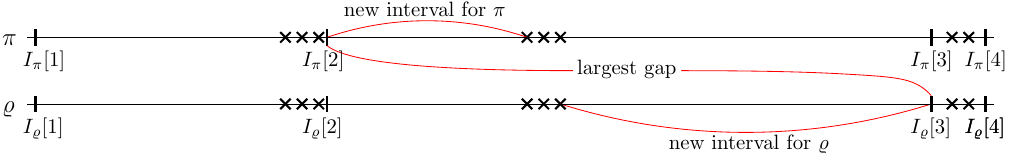}}
\caption{After the first 12 items are sent to $\pi$ and $\rho$}
\label{fig:ex1}
\end{subfigure}
\begin{subfigure}{\linewidth}
\centering{\includegraphics[width=\linewidth]{./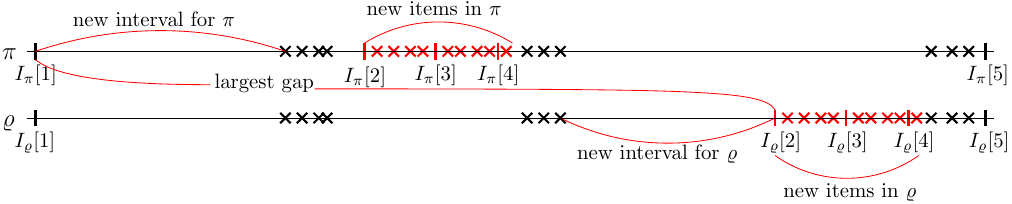}}
\caption{After $24$ items are appended}
\label{fig:ex2}
\end{subfigure}
\begin{subfigure}{\linewidth}
\centering{\includegraphics[width=\linewidth]{./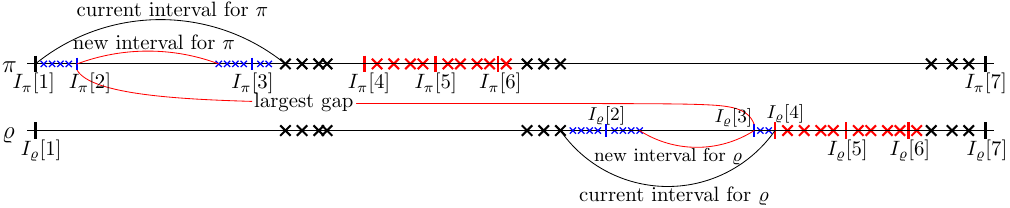}}
\caption{After $36$ items are appended}
\label{fig:ex3}
\end{subfigure}
\begin{subfigure}{\linewidth}
\centering{\includegraphics[width=\linewidth]{./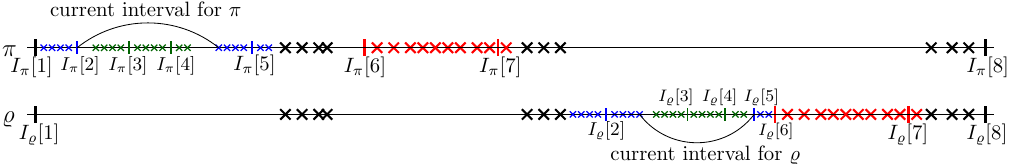}}
\caption{Streams $\pi$ and $\sigma$ with all $N_3 = 48$ items}
\label{fig:ex4}
\end{subfigure}
\caption{An example of the construction of streams $\pi$ and $\sigma$.}
\label{fig:example}
\end{figure*}

\subsection{Example of the Adversarial Strategy}

We now give 
an example of the construction with $k=3$ in Figure~\ref{fig:example}.
The universe is $U = \Re$, which we depict by the real line.
For simplicity, we set $\eps = \frac16$ (although recall that we
require $\eps < \frac{1}{32}$ for our analysis in 
Section~\ref{sec:spaceGapInquality} to hold).

The adversarial construction starts by calling $\advstrat$ $(3, \emptyset,$ $\emptyset,$ $(-\infty, \infty),$ $(-\infty, \infty))$.
The procedure then recursively calls itself twice and in the base case $k=1$,
the two streams $\pi$ and $\rho$ are initialized by $\frac2\eps = 12$
items (we can assume the same items are added to the two streams).
The quantile summary under consideration ($\ds$) chooses to store some of them, but
as $2\eps N_1 = 4$, it cannot forget four consecutive items.

At this point, we are in the execution of $\advstrat$ $(2, \emptyset,$ $\emptyset, (-\infty, \infty),$ $(-\infty, \infty))$,
having finished the recursive call in line~\ref{algLn:stage1}.
Figure~\ref{fig:ex1} shows
the first $12$ items sent to streams $\pi$ and $\rho$, depicted on the real line
for each stream.
A short line segment represents an item that is stored in item array
$I$, while a cross depicts an item not stored by $\ds$.
Note that the largest gap is between the second and the third stored item, i.e.,
$i = 2$ in line~\ref{algLn:gapDef} of \advstrat{}.
This is because $\rank_\pi(I_\pi[2]) = 5$ and  $\rank_\rho(I_\rho[3]) = 9$
(the gap of the same size is also between the first and the second item).
Next, the procedure \refineintervals{} finds the largest gap
and identifies new intervals $(\alpha_\pi, \beta_\pi)$
and $(\alpha_\rho, \beta_\rho)$ for the second recursive call. %

In the execution of  $\advstrat(1, \pi', \rho', (\alpha_\pi,
\beta_\pi), (\alpha_\rho, \beta_\rho))$, there are
$\frac2\eps = 12$ items appended to the streams and the largest gap can be of size at most $2\eps N_2 = 8$.
In Figure~\ref{fig:ex2}, we show the last $12$ items, appended in the second leaf of the recursion tree,  highlighted in red.
Note that fewer of the first $12$ items in the streams are now stored
and that among the $12$ newly appended items, the first, the sixth,
and the eleventh are stored for both streams.
The execution returns to the root node of the recursion tree and
the adversary finds the largest gap together with new intervals.
One of the largest gaps is now between the first and the second stored item
(in this example, all gaps have the same size of $8$).

The execution then goes to the third leaf, where $12$ items are
appended for the third time.
Figure~\ref{fig:ex3} illustrates this, with 
the most recent $12$ items shown smaller and in blue.
In the execution of $\advstrat$ for $k=2$, the largest gap is found
---
note that we look for it only in the current intervals,
and that its size can be at most $2\eps \cdot 3\cdot \frac2\eps = 12$ items.
One of the two largest gaps is between the second and the third item
in the restricted item arrays; these are also the second and the third item
in the whole item arrays
(the other gap of the same size is between the first and the second stored item).
Again, new intervals are identified for the execution of the last leaf of the recursion tree.

Finally, the last $12$ items are appended to the streams, which completes
the construction.
Figure~\ref{fig:ex4} shows the final state, with these last $12$ items
added in green.
The current intervals are with respect to the last leaf of the recursion tree.
\qed

\subsection{Properties of the Adversarial Strategy}
We first give some observations.
Note that the recursion tree of an execution of \advstrat($k$)
indeed has $2^{k-1}$ leaves which each correspond to calling the
strategy for $k=1$, 
and that the items are appended to streams only in the leaves, namely, $\frac{2}{\eps}$ items to each stream in each leaf.
It follows that the number of items appended is $N_k = \oneovereps\cdot 2^k$.
Observe that for a general recursive call of \advstrat, the input streams $\pi$ and $\rho$ 
may already contain some items.
Also, the behavior of comparison-based quantile summary $\ds$ may be different when processing items
appended during the recursive call in line~\ref{algLn:stage1} and when processing items from the call in line~\ref{algLn:stage2}.
The reason is that the computation of $\ds$ is also influenced by items outside the intervals, i.e.,
by items in streams $\pi$ and $\rho$ that are from other branches of the recursion tree.
We remark that items in each of $\pi''$ and $\rho''$ are distinct within the streams
(but the two streams may share some items, which does not affect our analysis).

We now prove that the streams constructed are indistinguishable
and that we do not violate any assumption on input for any recursive call.
We use the following lemma derived from~\cite{hung10_qs_lower_bound}
(which is a simple consequence of the facts that $\ds$ is comparison-based
and the memory states $(I_\pi, G_\pi)$ and $(I_\rho, G_\rho)$ are equivalent).

\begin{lemma}[Implied by Lemma~2 in~\cite{hung10_qs_lower_bound}]\label{lem:indistAppending}
Suppose that streams $\pi$ and $\rho$ are indistinguishable for $\ds$ and
let $I_\pi$ and $I_\rho$ be the corresponding item arrays after processing
$\pi$ and $\rho$, respectively. Let $a, b$ be any two items
such that $\min \{ i | a \le I_{\pi}[i] \} = \min \{ i | b \le I_{\rho}[i] \}$.
Then the streams $\pi a$ and $\rho b$ are indistinguishable.
\end{lemma}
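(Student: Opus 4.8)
The plan is to show that processing the new item $a$ from the memory state reached after $\pi$ triggers exactly the same computation of $\ds$ as processing $b$ from the memory state reached after $\rho$; both conditions of Definition~\ref{def:indistinguishableStreams} for the extended streams $\pi a$ and $\rho b$ then follow by reading off what this common computation does.

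First I would unpack indistinguishability of $\pi$ and $\rho$: it gives $|I_\pi| = |I_\rho|$, $G_\pi = G_\rho$, and a shared position map, i.e.\ for every stored index $i$ there is a stream index $j(i)$ with $I_\pi[i] = a_{j(i)}$ and $I_\rho[i] = b_{j(i)}$. Setting $p := \min\{ i : a \le I_\pi[i] \}$, which by hypothesis equals $\min\{ i : b \le I_\rho[i] \}$ (and which I read as $|I_\pi|+1 = |I_\rho|+1$ when both sets are empty), I would check that the tuple of comparison outcomes of $a$ against $I_\pi[1],\dots,I_\pi[|I_\pi|]$ coincides with that of $b$ against $I_\rho[1],\dots,I_\rho[|I_\rho|]$: for indices below $p$ both comparisons return ``greater'', and for indices at least $p$, using that $I_\pi$ and $I_\rho$ are sorted, both return ``smaller''. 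Here I would invoke our convention that items within a stream are distinct and that $a$ (resp.\ $b$) is a freshly drawn item, so $a \ne I_\pi[j]$ and $b \ne I_\rho[j]$ for all $j$ and no equality outcome arises; this is the only regime in which the lemma is invoked in the construction. Since by condition~\ref{itm:compArrival} of Definition~\ref{def:comparisonBased} the computation of $\ds$ on the arriving item depends only on $|I|$, on $G$, and on this tuple of comparison outcomes — all three identical in the two runs — $\ds$ performs literally the same update: the same subset $S$ of old indices is retained, the arriving item is kept in both runs or discarded in both, and the new general memory is the same. This yields $G_{\pi a} = G_{\rho b}$ and $|I_{\pi a}| = |I_{\rho b}|$, which is condition~(1).

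For condition~(2) I would describe the updated arrays concretely. Because $I_\pi$ is sorted and $a$ sorts into slot $p$, the array $I_{\pi a}$ is $(I_\pi[q])_{q \in S,\, q < p}$, then $a$ if it is retained, then $(I_\pi[q])_{q \in S,\, q \ge p}$; and $I_{\rho b}$ has the identical shape with $\rho$ and $b$ in place of $\pi$ and $a$ and with the same $S$ and $p$. Going entry by entry: an entry originating from an old index $q$ satisfies $I_{\pi a}[i] = I_\pi[q] = a_{j(q)}$ and $I_{\rho b}[i] = I_\rho[q] = b_{j(q)}$, and $a_{j(q)}$, $b_{j(q)}$ are still the $j(q)$-th items of $\pi a$ and $\rho b$; the entry equal to the arriving item, if present, is the last (the $(N+1)$-th) item of $\pi a$ resp.\ $\rho b$. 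In every case condition~(2) holds, so $\pi a$ and $\rho b$ are indistinguishable.

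The step I expect to be most delicate is justifying that ``identical inputs to $\ds$'' forces ``identical update of the item array'', in particular that the retained index set $S$ and the insertion slot $p$ are the same in the two runs. This is exactly where the comparison-based restriction is essential: by Definition~\ref{def:comparisonBased}, the new item array is a subset of the old one together with the arriving item, and $\ds$ may refer to a stored item only through its comparison relationships with the arriving item (and through $G$ and $|I|$), and those relationships have been arranged to agree. A minor secondary nuisance, the equality case among comparison outcomes, is dispatched by the fresh-item convention noted above.
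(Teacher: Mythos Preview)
Your proposal is correct and is exactly the argument the paper has in mind: the paper does not actually prove this lemma but only remarks that it is ``a simple consequence of the facts that $\ds$ is comparison-based and the memory states $(I_\pi, G_\pi)$ and $(I_\rho, G_\rho)$ are equivalent'' and defers to~\cite{hung10_qs_lower_bound}. Your write-up is the natural way to spell out that remark, and your handling of the equality case via the freshness convention is appropriate since that is the only setting in which the lemma is invoked.
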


\begin{lemma}\label{lem:streamsIndistinguishable}
Consider an execution of $\advstrat\big(k, \pi, \rho,$ $(\ell_\pi, r_\pi),$ $(\ell_\rho, r_\rho)\big)$ for $k\ge 1$
and let $\pi''$ and $\rho''$ be the returned streams. 
Suppose that streams $\pi$ and $\rho$ and intervals $(\ell_\pi, r_\pi)$ and $(\ell_\rho, r_\rho)$
satisfy the assumptions on the input of $\advstrat$.
Then, for $k > 1$, the assumptions on input for the recursive calls in lines~\ref{algLn:stage1} and~\ref{algLn:stage2}
and for the call of \refineintervals{} in line~\ref{algLn:refineIntervals} are satisfied,
and, for any $k\ge 1$, the streams $\pi''$ and $\rho''$ are indistinguishable.
\end{lemma}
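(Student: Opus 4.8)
The statement is naturally proved by induction on $k$, tracking two things simultaneously: that the input assumptions on all recursive calls are met, and that the returned streams $\pi''$ and $\rho''$ are indistinguishable. The base case $k = 1$ is the heart of the "indistinguishability" part and the one place where Lemma~\ref{lem:indistAppending} does the real work. For the inductive step I would verify the three input preconditions for each of the two recursive calls and for \refineintervals{} in turn, then conclude indistinguishability of $\pi''$ and $\rho''$ from the inductive hypothesis applied to the second recursive call in line~\ref{algLn:stage2}.

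\textbf{Base case ($k=1$).} We append $2/\eps$ items to $\pi$ from $(\ell_\pi, r_\pi)$ and $2/\eps$ items to $\rho$ from $(\ell_\rho, r_\rho)$, in the same order. I would append them one at a time and show indistinguishability is preserved at each step via Lemma~\ref{lem:indistAppending}. The key point is that precondition~(iii) of \advstrat{} says exactly that $\min\{i \mid a \le I_\pi[i]\} = \min\{i \mid b \le I_\rho[i]\}$ for the items $a,b$ we are about to append (this is the hypothesis of Lemma~\ref{lem:indistAppending}), so $\pi a$ and $\rho b$ stay indistinguishable. The subtlety is that after appending one pair of items the intervals are no longer guaranteed to contain no stored items — a newly appended item may be stored by $\ds$. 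I would handle this by observing that we may pick the $2/\eps$ items for $\pi$ to all lie strictly between $\ell_\pi$ and the first appended item is irrelevant; more carefully, I would choose the items so that each successive pair is inserted "to the right of" all previously appended items of this leaf, so the relevant $\min$-index condition is inherited from the previous step — this is where continuity of the universe is used, to always have room. An alternative cleaner route: induct on the number of appended items within the leaf, maintaining as invariant both indistinguishability and the condition that the next pair of items to be appended (chosen from the intervals, to the right of items already added in this leaf) still satisfies the $\min$-index hypothesis; the condition on the next pair follows because inserting an item into an interval only changes $\min$-indices at positions to its left or equal, and we always move rightward.

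\textbf{Inductive step ($k>1$).} Assume the claim for $k-1$. First, the input to the recursive call in line~\ref{algLn:stage1} is literally $(\pi,\rho,(\ell_\pi,r_\pi),(\ell_\rho,r_\rho))$, which satisfies the \advstrat{} preconditions by hypothesis, so by induction $\pi'$ and $\rho'$ are indistinguishable and contain exactly $\oneovereps\cdot 2^{k-1}$ new items, all from the respective intervals. Next I check the input to \refineintervals{} in line~\ref{algLn:refineIntervals}: precondition~(i) is indistinguishability of $\pi',\rho'$, just established; precondition~(ii) requires that only the last $N' \ge 2$ items of $\pi'$ (resp.\ $\rho'$) lie in $(\ell_\pi,r_\pi)$ (resp.\ $(\ell_\rho,r_\rho)$) — here $N' = \oneovereps\cdot 2^{k-1} \ge 2$, and the "only the last" part holds because precondition~(ii) of \advstrat{} says $\pi$ had no items in $(\ell_\pi,r_\pi)$, while the recursive call added exactly those $N'$ items, all inside the interval. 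Then the output of \refineintervals{}, via Observation~\ref{obs:refineIntervalsOutput}, gives intervals $(\alpha_\pi,\beta_\pi)\subset(\ell_\pi,r_\pi)$ and $(\alpha_\rho,\beta_\rho)\subset(\ell_\rho,r_\rho)$ with: no item of $\pi'$ in $(\alpha_\pi,\beta_\pi)$ and none of $\rho'$ in $(\alpha_\rho,\beta_\rho)$ (part~(i)), and the $\min$-index equality for any $a\in(\alpha_\pi,\beta_\pi)$, $b\in(\alpha_\rho,\beta_\rho)$ (part~(ii)). These are exactly preconditions (ii) and (iii) of \advstrat{} for the call in line~\ref{algLn:stage2}, and precondition~(i) there is indistinguishability of $\pi',\rho'$. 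So the inductive hypothesis applies to line~\ref{algLn:stage2} and yields that $\pi'' = \pi'\pi_k'$ and $\rho'' = \rho'\rho_k'$ are indistinguishable, completing the step.

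\textbf{Main obstacle.} The routine-looking parts (checking preconditions in the inductive step) are genuinely mechanical once Observation~\ref{obs:refineIntervalsOutput} is in hand. The one place requiring care is the base case: maintaining indistinguishability as we append the $2/\eps$ items one by one, because after the first insertion the $\min$-index hypothesis of Lemma~\ref{lem:indistAppending} for the \emph{next} pair is no longer directly supplied by \advstrat{}'s precondition~(iii) and must be re-derived. The clean fix is the rightward-insertion invariant sketched above, exploiting that $\min\{i \mid a \le I[i]\}$ is monotone-friendly under inserting items that are all larger than $a$. I expect that to be the only nontrivial step; everything else is bookkeeping with the preconditions.
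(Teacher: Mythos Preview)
Your proposal is correct and follows essentially the same inductive structure as the paper's proof: the inductive step is identical (checking preconditions for line~\ref{algLn:stage1}, then \refineintervals{} via the ``only the last $N'$ items'' observation, then line~\ref{algLn:stage2} via Observation~\ref{obs:refineIntervalsOutput}), and your base case is the paper's argument made more explicit. The subtlety you flag---that the $\min$-index hypothesis of Lemma~\ref{lem:indistAppending} must be re-established after each appended pair---is exactly what the paper's phrase ``appended in their order'' is doing (increasing order ensures each new pair lands to the right of all previously appended items, so the $\min$-index is still determined by the first stored item $\ge r_\pi$, resp.\ $\ge r_\rho$, which correspond by indistinguishability); your rightward-insertion invariant is the right way to spell this out.
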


\begin{proof}
The proof is by induction on $k$. 
In the base case $k = 1$,  we use the fact that the $\frac{2}{\eps}$ items from the corresponding intervals are appended in their order
and that $\min \{ i | a \le I_{\pi}[i] \} = \min \{ i | b \le I_{\rho}[i] \}$ for any $a\in (\ell_\pi, r_\pi)$ and $b\in (\ell_\rho, r_\rho)$ by assumption~(iii) on the input of \advstrat.
Thus, applying Lemma~\ref{lem:indistAppending} for each pair of appended items, we get that $\pi''$ and $\rho''$ are indistinguishable.

Now consider $k>1$.
Note that assumptions (i)-(iii) of the first recursive call (in line~\ref{algLn:stage1})
are satisfied by the assumptions of the considered execution.
So, by applying the inductive hypothesis for the first recursive call, streams $\pi'$ and $\rho'$ are indistinguishable.

Next, the assumptions of procedure~\refineintervals{}, called in line~\ref{algLn:refineIntervals},
are satisfied, since streams $\pi'$ and $\rho'$ are indistinguishable,
$\pi$ contains no item from $(\ell_\pi, r_\pi)$,
$\rho$ contains no item from $(\ell_\rho, r_\rho)$,
and the first recursive call in line~\ref{algLn:stage1} generates $N' = \oneovereps\cdot 2^{k-1}$
items from $(\ell_\pi, r_\pi)$ into $\pi'$ and $N'$ items from $(\ell_\rho, r_\rho)$ into $\rho'$.

Then, assumption~(i) of the second recursive call in line~\ref{algLn:stage2}
holds, since $\pi'$ and $\rho'$ are indistinguishable, and
assumptions~(ii) and (iii) are satisfied by applying Observation~\ref{obs:refineIntervalsOutput}.
Finally, we use the inductive hypothesis for the recursive call
in line~\ref{algLn:stage2} and get that streams $\pi''$ and $\rho''$ are indistinguishable.
\end{proof}

Our final observation is that for any $1\le i\le |I_{\pi''}|$,
we have that $\rank_{\pi''}(I_{\pi''}[i]) \le \rank_{\rho''}(I_{\rho''}[i])$.
The proof follows by induction on $k$ (similarly to Lemma~\ref{lem:streamsIndistinguishable}) and 
by the definition of the new intervals in lines~\ref{algLn:gapDef}-\ref{algLn:intervalForRho}
of procedure \refineintervals{},
namely, since the new interval for $\pi$ is in the leftmost region of the largest gap,
while the new interval for $\rho$ is in the rightmost region.

\section{Space-Gap Inequality}\label{sec:spaceGapInquality}

\subsection{Intuition for the Inequality}
In this section, we analyze the space required by data structure $\ds$
when invoked on the two adversarial inputs from the previous section.
Recall that our general goal is to prove that
$\ds$ needs to store $c\cdot \oneovereps$ items from the first half of the whole stream $\pi$ (or, equivalently, from $\rho$)
and $c\cdot \oneovereps\cdot (k-1)$ items from the second half (by using induction on the second half),
where $c > 0$ is a constant.
Note also that if $\ds$ stores $c\cdot \oneovereps\cdot k$ items from
the first half of the stream, the second half of the argument is not even needed.

However, we actually need to prove a similar result for {\em any}
internal node of the recursion tree,  where the bounds as stated above may not hold. 
For instance, $\ds$ may use nearly no space for some part of the stream,
which implies a lot of uncertainty there, but still may be able to provide any $\eps$-approximate
$\phi$-quantile, since the largest gap introduced earlier is very low.
We thus give a space lower bound for an execution of \advstrat{} 
that depends on the largest gap size, denoted $g$, which is introduced in this execution.
Roughly, the space lower bound is $c\cdot (\log g)\cdot  N_k / g$ for a constant $c>0$, so by setting
$g = 2 \eps N_k$ we get the desired result.
For technical reasons, the actual bound stated below
  as~\eqref{eqn:spaceVsGap} is a bit more complicated.
We refer to this bound as the ``space-gap inequality'', and the bulk
of the work in this section is devoted to proving this inequality.

The crucial claim needed in the proof is that, for $k > 1$, the largest gap size $g$
is, in essence, the sum of the largest gap sizes $g'$ and $g''$ introduced in the first and the second recursive call, respectively.
This claim gives two cases: Either the gap $g'$ from the first recursive call
is small (less than approximately half of $g$) and thus $\ds$ uses a lot of space for items from the first recursive call,
or $g' \gtrsim \half g$, so $g'' \lesssim \half g$ and we use induction on the second recursive call, 
together with a straightforward space lower bound for items from the first half of the stream.

Finally, we prove that not only the maximum space usage of the algorithm is large
	but also its \emph{average} space usage during the computation is large --- that is, $\ds$ cannot use a too small space in most steps.

\subsection{Stating the Space-Gap Inequality}

We perform the formal analysis by induction.
We define
\[S(k, \pi, \rho, (\ell_\pi, r_\pi), (\ell_\rho, r_\rho)) := \left|I^{(\ell_\pi, r_\pi)}_{\pi''}\right|,\]
where
\[(\pi'', \rho'') = \advstrat(k, \pi, \rho, (\ell_\pi, r_\pi), (\ell_\rho, r_\rho)).\]
In words, it is the size of the item array restricted to $(\ell_\pi, r_\pi)$
after the execution of $\ds$ on stream $\pi''$ (or, equivalently, with $\rho$ instead of $\pi$).
For simplicity, we write $S_k = S(k, \pi, \rho, (\ell_\pi, r_\pi), (\ell_\rho, r_\rho))$
when $\pi$ and $\rho$ are clear from the context.

Furthermore, let $\Sbar(k, \pi, \rho, (\ell_\pi, r_\pi), (\ell_\rho, r_\rho))$
	be the average size of $I^{(\ell_\pi, r_\pi)}_{\widebar{\pi}}$ over all prefixes $\widebar{\pi}$
	of $\pi''$ that contain an item from $(\ell_\pi, r_\pi)$, for $\pi''$ defined as above, and let $\Sbar_k = \Sbar(k, \pi, \rho, (\ell_\pi, r_\pi), (\ell_\rho, r_\rho))$.
	That is, we take the average size of the item array restricted to $(\ell_\pi, r_\pi)$ during the execution of
	$\advstrat(k, \pi, \rho, (\ell_\pi, r_\pi), (\ell_\rho, r_\rho))$.

We prove a lower bound on $\Sbar_k$ that depends on the largest gap
between the restricted item arrays for $\pi$ and for $\rho$.
We enhance the definition of the gap to take the intervals restriction into account.
\begin{definition}\label{def:gap}
For indistinguishable streams $\sigma$ and $\tau$ and intervals $(\ell_\sigma, r_\sigma)$ and $(\ell_\tau, r_\tau)$,
let $\overline{\sigma}$ and $\overline{\tau}$ be the substreams of $\sigma$ and $\tau$ consisting
only of items from intervals $(\ell_\sigma, r_\sigma)$ and $(\ell_\tau, r_\tau)$, respectively.
Moreover, let $I'_{\sigma} = I^{(\ell_\sigma, r_\sigma)}_{\sigma}$
and $I'_{\tau} = I^{(\ell_\tau, r_\tau)}_{\tau}$ be the restricted item arrays after processing ${\sigma}$ and ${\tau}$, respectively.
We define the \emph{largest gap} between $I'_{\sigma}$ and $I'_{\tau}$ in intervals $(\ell_\sigma, r_\sigma)$ and $(\ell_\tau, r_\tau)$ as
$$\gap\big({\sigma}, {\tau}, (\ell_\sigma, r_\sigma), (\ell_\tau, r_\tau)\big) = \max_{1\le i < |I'_{\tau}|}
	\rank_{\overline{\tau}}(I'_{\tau}[i+1]) - \rank_{\overline{\sigma}}(I'_{\sigma}[i]) \,.$$
\end{definition}

Note that the ranks are with respect to substreams $\overline{\sigma}$ and $\overline{\tau}$, and
that the largest gap is always at least one, supposing that the ranks of stored items
are not smaller for $\tau$ than for $\sigma$.
We again have $\gap\big(\sigma, \tau, (\ell_\sigma, r_\sigma), (\ell_\tau, r_\tau)\big)\ge \gap\big(\sigma, \sigma, (\ell_\sigma, r_\sigma), (\ell_\sigma, r_\sigma)\big)$.
Also, as the restricted item arrays are enclosed by interval boundaries, the following simple bound holds:
\begin{align}
\nonumber
  S_k = S(k, \pi, \rho, (\ell_\pi, r_\pi), (\ell_\rho, r_\rho))
&\ge \frac{N_k}{\gap\big(\pi'', \pi'', (\ell_\pi, r_\pi), (\ell_\pi,
  r_\pi)\big)} \\
&\ge \frac{N_k}{\gap\big(\pi'', \rho'', (\ell_\pi, r_\pi), (\ell_\rho, r_\rho)\big)}\,,
  \label{eqn:trivSpaceBoundMax}
\end{align}
where $(\pi'', \rho'') = \advstrat(k, \pi, \rho, (\ell_\pi, r_\pi), (\ell_\rho, r_\rho))$
and $N_k = \oneovereps\cdot 2^k$.
Furthermore, as the gap size can only increase and moreover, in the second half of the steps, $\pi''$ contains 
at least $N_k / 2$ items, we get that the average size of the restricted item arrays while processing $\pi''$
is 
\begin{align}
	\Sbar_k = \Sbar(k, \pi, \rho, (\ell_\pi, r_\pi), (\ell_\rho, r_\rho)) 
	&\ge \frac{N_k}{2\cdot \gap\big(\pi'', \rho'', (\ell_\pi, r_\pi), (\ell_\rho, r_\rho)\big)}\,.
	\label{eqn:trivSpaceBoundAvg}
\end{align}

The following lemma (proved below) shows a stronger inequality between the space and the largest gap.

\begin{lemma}[Space-gap inequality]\label{lem:spaceVsGap}
Consider an execution of $\advstrat(k, \pi, \rho, (\ell_\pi, r_\pi),$ $(\ell_\rho, r_\rho))$.
Let $\pi''$ and $\rho''$ be the returned streams,
and let $g := \gap\big(\pi'', \rho'',$ $(\ell_\pi, r_\pi),$ $(\ell_\rho, r_\rho)\big)$.
Then, for $\Sbar_k = \Sbar(k, \pi, \rho,$ $(\ell_\pi, r_\pi),$ $(\ell_\rho, r_\rho))$, the following \emph{space-gap inequality} holds with $c = \frac{1}{16} - 2\eps > 0$:
\begin{equation}\label{eqn:spaceVsGap}
\Sbar_k \ge c\cdot (\log_2 g + 1)\cdot \left(\frac{N_k}{g} - \frac{1}{4\eps}\right)\,.
\end{equation}
\end{lemma}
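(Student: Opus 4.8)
The plan is to prove the space-gap inequality~\eqref{eqn:spaceVsGap} by induction on $k$, following the case analysis sketched in the intuition. For the base case $k=1$, the call appends $\frac{2}{\eps}$ items in order to each stream inside the current intervals, and $\ds$ cannot forget $2\eps N_1 = 4$ consecutive items, so $g = \gap(\pi'',\rho'',\dots) \le 2\eps N_1$ and moreover the number of stored items restricted to the interval is at least $\frac{N_1}{g}$ by~\eqref{eqn:trivSpaceBound}. Since $\log_2 g + 1$ is small (at most $\log_2(4) + 1 = 3$ when $\eps$ is the relevant size) and $\frac{N_1}{g} - \frac{1}{4\eps}$ is correspondingly controlled, I would verify~\eqref{eqn:spaceVsGap} directly by a short calculation using $c = \frac18 - 2\eps$; here one checks that even $S_1 \ge \frac{N_1}{g} \ge \frac{1}{2\eps}$ suffices against the right-hand side.

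For the inductive step $k > 1$, let $g'$ and $g''$ be the largest gaps (in the restricted sense, over the respective current intervals) introduced by the first recursive call in line~\ref{algLn:stage1} and the second in line~\ref{algLn:stage2}, and let $S'_{k-1}$, $S''_{k-1}$ be the corresponding restricted item-array sizes, to which the inductive hypothesis applies. The key structural claim I would establish first is that $g$, the overall largest gap in $(\ell_\pi, r_\pi), (\ell_\rho, r_\rho)$ measured after processing $\pi''$, satisfies roughly $g \le g' + g''$ (up to an additive constant like $\pm 1$ or $\pm 2$ coming from the interval-boundary conventions and the fact that the two halves of the stream interleave in rank only at one point, namely inside the gap chosen by \refineintervals{}). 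Intuitively: the largest gap in the final restricted arrays either lies entirely among items of the first half, among items of the second half, or straddles the single ``slot'' where \refineintervals{} placed the new intervals; in the straddling case the contribution splits into a first-half part bounded by $g'$ and a second-half part bounded by $g''$.

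With $g \lesssim g' + g''$ in hand, I would split into two cases. Case 1: $g' \ge \half g$ (roughly). Then $g'' \lesssim \half g$, and by the inductive hypothesis applied to the second recursive call, $S''_{k-1} \ge c(\log_2 g'' + 1)(\frac{N_{k-1}}{g''} - \frac{1}{4\eps}) \gtrsim c \cdot \log_2 g \cdot (\frac{N_k/2}{g/2} - \frac{1}{4\eps})$, using $\log_2 g'' + 1 \ge \log_2 g$ and $N_{k-1} = N_k/2$; adding a straightforward lower bound of $\ge c \cdot \oneovereps$ (or $\ge \frac{1}{2\eps}$) stored items coming from the first half — these items persist in $I^{(\ell_\pi, r_\pi)}_{\pi''}$ because they bracket the new sub-interval — gives $S_k \ge S'_{\text{first half}} + S''_{k-1}$ meeting the bound. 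Case 2: $g' < \half g$ (roughly). Then by the inductive hypothesis on the first recursive call, $S'_{k-1} \ge c(\log_2 g' + 1)(\frac{N_{k-1}}{g'} - \frac{1}{4\eps})$ with $\frac{N_{k-1}}{g'} = \frac{N_k/2}{g'} \ge \frac{N_k}{g}$, which already gives a large restricted array size; since restricting to $(\ell_\pi, r_\pi)$ after processing $\pi''$ only loses items that were ``un-stored'' by $\ds$ while processing the second half, I would argue the loss is controlled (or simply that $S_k \ge S'_{k-1}$ is not quite true and instead re-derive the needed count from the monotonicity of $|I|$ and the persistence of enough items), yielding the bound. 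The arithmetic bookkeeping — chasing the additive $\frac{1}{4\eps}$ terms and the $\pm O(1)$ slack in $g \le g' + g''$ through the logarithms with the specific constant $c = \frac18 - 2\eps$, and checking $\eps < \frac1{16}$ is exactly what makes $c > 0$ and the inequalities close — is where I expect the real work to be.

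The main obstacle I anticipate is twofold and both parts are essentially accounting rather than conceptual. First, making the claim $g \le g' + g'' + O(1)$ precise requires carefully relating ranks with respect to the full streams $\pi'', \rho''$ versus ranks with respect to the substreams and handling the enclosing interval endpoints $\ell_\pi, r_\pi$ (which may not even be stored in the full array, as emphasized after Figure~\ref{fig:gap}); I would isolate this as a standalone sub-lemma. Second, and more delicate, is the claim in Case 2 that restricting the final array $I^{(\ell_\pi, r_\pi)}_{\pi''}$ does not destroy the lower bound obtained after the first recursive call — because $\ds$ continues to process the second-half items and may discard many first-half items in the meantime. The resolution is that the second recursive call only appends items inside the new sub-intervals $(\alpha_\pi,\beta_\pi) \subset (\ell_\pi, r_\pi)$, so the comparison structure guarantees (via indistinguishability and Lemma~\ref{lem:indistAppending}) that the gap-structure outside that sub-interval is preserved in the sense needed; quantifying this so that the logarithmic factor and the $\frac{N_k}{g}$ factor both survive is the crux of the induction.
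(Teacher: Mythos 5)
Your overall shape (induction on $k$, relating the outer gap $g$ to the gaps $g'$, $g''$ of the two recursive calls, then a two-case analysis combining induction on one half with a counting bound for the other) matches the paper, but the key structural claim is stated in the wrong direction, and that direction is both false and not what your own case analysis uses. You assert $g \le g' + g'' + O(1)$ (``the final gap lies in one half or straddles the slot''), but during the second recursive call $\ds$ may discard first-half items inside $(\ell_\pi, r_\pi)$, so the final gap among first-half items is not controlled by $g'$ at all and $g$ can vastly exceed $g'+g''$. More importantly, the deduction you need in Case~1 --- ``$g' \gtrsim \half g$ implies $g'' \lesssim \half g$'' --- requires the \emph{superadditive} inequality $g \ge g' + g'' - 1$ (the paper's Claim~\ref{clm:gapsIneq}): because \refineintervals{} places the new intervals in the extreme regions of a gap realizing $g'$, any gap of size $g''$ inside the new intervals extends, after translating ranks from the substreams $\pi''_{k-1},\rho''_{k-1}$ to $\pi_k,\rho_k$, to a gap of size at least $g'+g''-1$ in the outer intervals. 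Proving this (including the corner cases where $\alpha_\pi$ or $\beta_\rho$ is no longer stored in the restricted arrays) is the crux of the whole argument, and your sketch neither states nor supplies it.

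There are also quantitative gaps that would prevent the arithmetic from closing. In your Case~1, adding only $c\cdot\oneovereps$ (or $\frac{1}{2\eps}$) first-half items is insufficient: with $g''\approx \half g$ the inductive bound for the second call falls short of the target $c(\log_2 g+1)\bigl(\frac{N_k}{g}-\frac{1}{4\eps}\bigr)$ by roughly $c\,N_k/g$, and inside the recursion $g$ may be far below $4\eps N_k$, so $N_k/g \gg \oneovereps$. The paper instead counts surviving first-half items directly from the definition of $g$: every gap in $I^{(\ell_\pi,r_\pi)}_{\pi''}$ has size at most $g$, so at least $(N_{k-1}-2g)/g$ first-half items are stored, and this (together with the dispatch $g<4\eps N_k$) is exactly what makes the computation work with $c=\frac18-2\eps$. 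Your base case relies on $g\le 2\eps N_1$ via Lemma~\ref{lem:indistStreamsGapUB}, which is precisely what the paper's footnote rules out inside the recursion ($\pi''$ and $\rho''$ are much longer than $N_1$); the correct dispatch is that if $g\ge 4\eps N_k$ the right-hand side of~\eqref{eqn:spaceVsGap} is nonpositive, while if $g\le 2^7$ then $\log_2 g+1\le 1/c$ and the trivial bound~\eqref{eqn:trivSpaceBound} suffices, which also settles $k=1$. Finally, your Case~2 threshold $g'<\half g$ leaves the same $\sim c\,N_k/g$ shortfall, since it only yields $(\log_2 g'+1)\frac{N_{k-1}}{g'}\ge \log_2 g\cdot\frac{N_k}{g}$; the paper avoids this by splitting on whether the first-call bound~\eqref{eqn:stage1largeSpace} already dominates, in which case $S_k\ge S'_{k-1}$ (which does hold, by monotonicity of $|I|$ and the fact that no appended item lies outside $(\ell_\pi,r_\pi)$) finishes immediately.
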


We remark that we do not optimize the constant $c$. 
Note that the right-hand side (RHS) of~\eqref{eqn:spaceVsGap} is non-increasing for integer $g\ge 1$,
as $(\log_2 g + 1) / g$ is decreasing for $g\ge 2$ and equals $1$ for $g\in \{1,2\}$.

First, observe that Theorem~\ref{thm:main} directly follows from
Lemma~\ref{lem:spaceVsGap}, and so our subsequent work will be in
proving this space-gap inequality.
Indeed, consider any integer $k\ge 1$ and let $(\pi, \rho) = \advstrat(k,$ $\emptyset, \emptyset,$ $(-\infty, \infty), (-\infty, \infty))$
be the constructed streams of length $N_k$.
Let $g = \gap\big(\pi, \rho,$ $(-\infty, \infty),$ $(-\infty, \infty)\big) = \gap(\pi, \rho)$.
Since $\pi$ and $\rho$ are indistinguishable by Lemma~\ref{lem:streamsIndistinguishable},
we have $g \le 2\eps N_k$ by Lemma~\ref{lem:indistStreamsGapUB}.
Since the RHS of~\eqref{eqn:spaceVsGap} is decreasing for $g\ge 2$ and $2\eps N_k\ge 2$,
it becomes the smallest for $g = 2\eps N_k$. Thus, by Lemma~\ref{lem:spaceVsGap}, the average memory used over all steps is at least
\begin{align*}\Sbar_k
& \ge c\cdot (\log_2 g + 1)\cdot \left(\frac{N_k}{g} -
\frac{1}{4\eps}\right)
\ge c\cdot (\log_2 2\eps N_k + 1)\cdot \left(\frac{1}{2\eps} - \frac{1}{4\eps}\right)
= \Omega\left( \oneovereps\cdot \log \eps N_k\right)\,.\end{align*}

\subsection{Preliminaries for the Proof of Lemma~\ref{lem:spaceVsGap}}

The proof is by induction on $k$.
First, observe that~\eqref{eqn:spaceVsGap} holds almost immediately if $g\le 2^7$.
As $c < 1/16$, we have $c\cdot (\log_2 g + 1) < 0.5$, and
so by the bound in~\eqref{eqn:trivSpaceBoundAvg},
$\Sbar_k > N_k / (2g) > c \cdot (\log_2 g + 1)\cdot \left(\frac{N_k}{g} - \frac{1}{4\eps}\right)$.
Similarly, if $g \ge 4\eps N_k$, then~\eqref{eqn:spaceVsGap} holds, since 
the RHS of~\eqref{eqn:spaceVsGap} is at most $0$ and $\Sbar_k\ge 0$.\footnote{
	Note, however, that we cannot use Lemma~\ref{lem:indistStreamsGapUB} to show $g\le 2\eps N_k$, since the largest gap
	has size bounded by $2\eps$ times the length of $\pi''$ or $\rho''$, which can be much larger than $N_k$
	(due to items from other branches of the recursion tree).
} We thus assume that $g\in (2^7, 4\eps N_k)$, which immediately implies the base case $k=1$ of the induction,
since $4\eps N_1 = 8 < 2^7$ because $N_1 = \frac{2}{\eps}$.

We now consider $k > 1$.
We refer to streams $\pi, \rho, \pi', \rho', \pi'', \rho''$,
intervals $(\alpha_\pi, \beta_\pi)$ and $(\alpha_\rho, \beta_\rho)$
with the same meaning as in Pseudocode~\ref{alg:advstrat}.
Let $I'_{\pi'} = I^{(\ell_\pi, r_\pi)}_{\pi'}$ and $I'_{\rho'} = I^{(\ell_\rho, r_\rho)}_{\rho'}$
be the restricted item arrays, as in Pseudocode~\ref{alg:refineintervals}.
We make use of the following notation:

\begin{itemize}[nosep]
\item Let $\pi'_{k-1}, \rho'_{k-1}$ be the substreams constructed during the recursive call in line~\ref{algLn:stage1}.
Let $S'_{k-1}$ be the size of $I'_{\pi'}$ (or, equivalently, of $I'_{\rho'}$), and
let $g' = \gap\big(\pi', \rho', (\ell_\pi, r_\pi), (\ell_\rho, r_\rho)\big)$ be the largest gap in the input intervals
after $\ds$ processes one of streams $\pi'$ and $\rho'$.

\item Let $I''_{\pi''} = I^{(\alpha_\pi, \beta_\pi)}_{\pi''}$ and $I''_{\rho''} = I^{(\alpha_\rho, \beta_\rho)}_{\rho''}$
be the item arrays restricted to the new intervals after $\ds$ processes streams $\pi''$ and $\rho''$, respectively.
Let $S''_{k-1}$ be the size of $I''_{\pi''}$, and
let $g'' = \gap\big(\pi'', \rho'',$ $(\alpha_\pi, \beta_\pi),$ $(\alpha_\rho, \beta_\rho)\big)$ be the largest gap in the new intervals.
Let $\pi''_{k-1}$ and $\rho''_{k-1}$ be the substreams constructed during the recursive call in line~\ref{algLn:stage2}.

\item Let $I'_{\pi''} = I^{(\ell_\pi, r_\pi)}_{\pi''}$ and $I'_{\rho''} = I^{(\ell_\rho, r_\rho)}_{\rho''}$
be the item arrays restricted to the input intervals after $\ds$ processes streams $\pi''$ and $\rho''$, respectively.

\item Finally, let $\pi_k$ and $\rho_k$ be the substreams of $\pi''$ and $\rho''$, restricted to
$(\ell_\pi, r_\pi)$ and $(\ell_\rho, r_\rho)$, respectively
(i.e., $\pi_k$ and $\rho_k$ consist of the items appended by the considered execution).
\end{itemize}

We remark that notation $I'$ abbreviates the restriction to intervals $(\ell_\pi, r_\pi)$ and $(\ell_\rho, r_\rho)$ (depending on the stream),
while notation $I''$ implicitly denotes the restriction to the new intervals $(\alpha_\pi, \beta_\pi)$ and $(\alpha_\rho, \beta_\rho)$.
Note that $\pi' = \pi \pi'_{k-1}$, and $\pi'' = \pi' \pi''_{k-1} = \pi \pi_k$, and $\pi_k = \pi'_{k-1}\pi''_{k-1}$,
and similarly for streams $\rho'$, $\rho''$, and $\rho_k$.

We now show a crucial relation between the gaps.

\begin{claim}\label{clm:gapsIneq}
$g \ge g' + g'' - 1$
\end{claim}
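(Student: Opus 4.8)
The plan is to relate the largest gap $g$ in the input intervals $(\ell_\pi, r_\pi)$ and $(\ell_\rho, r_\rho)$ after processing $\pi''$ and $\rho''$ to the two gaps $g'$ and $g''$ coming from the two recursive calls. First I would recall the key structural fact from \refineintervals{}: the new interval $(\alpha_\pi, \beta_\pi)$ for $\pi$ is exactly $(I'_{\pi'}[i], \next(\pi', I'_{\pi'}[i]))$ and the new interval for $\rho$ is $(\prev(\rho', I'_{\rho'}[i+1]), I'_{\rho'}[i+1])$, where $i$ is the position of the largest gap $g'$ in the restricted item arrays of $\pi'$ and $\rho'$. So all of the $\frac1\eps 2^{k-1}$ items appended in the second recursive call (line~\ref{algLn:stage2}) land strictly between $I'_{\pi'}[i]$ and $\next(\pi', I'_{\pi'}[i])$ in $\pi''$, and strictly between $\prev(\rho', I'_{\rho'}[i+1])$ and $I'_{\rho'}[i+1]$ in $\rho''$. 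Consequently, in the restricted array $I'_{\pi''}$ the entries $I'_{\pi'}[1], \dots, I'_{\pi'}[i]$ survive as a prefix (their comparisons are unchanged, and nothing new falls below $I'_{\pi'}[i]$ inside the interval) and similarly $I'_{\rho'}[i+1], \dots$ form a suffix of $I'_{\rho''}$; between these come the items of $I''_{\pi''}$ (resp.\ $I''_{\rho''}$) realizing the gap $g''$.

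Next I would locate a pair of indices in $I'_{\pi''}$, $I'_{\rho''}$ witnessing a gap of size at least $g' + g'' - 1$. Take $j$ to be the position in $I'_{\pi''}$ of the item $I'_{\pi'}[i]$, and let $j+1$ be the next index; then by the above structure $I'_{\pi''}[j] = I'_{\pi'}[i]$ and $I'_{\pi''}[j+1]$ is the smallest stored item of the second recursive call (the first entry of $I''_{\pi''}$, which exists since $g'' > 1$ — handle the degenerate $g'' = 1$ case separately using the straightforward bound). I want to lower-bound $\rank_{\overline{\rho''}}(I'_{\rho''}[j'+1]) - \rank_{\overline{\pi''}}(I'_{\pi''}[j])$ for a suitable $j'$. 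The cleanest route: pick the index realizing $g''$ inside the new intervals, say position $m$ in $I''_{\rho''}$, so $\rank(I''_{\rho''}[m+1]) - \rank(I''_{\pi''}[m]) = g''$ (ranks w.r.t.\ the substreams of $\pi''$/$\rho''$ inside the new intervals). Translate these ranks back to ranks w.r.t.\ $\overline{\pi''}$ and $\overline{\rho''}$: since exactly $\rank_{\overline{\pi'}}(I'_{\pi'}[i]) - 1$ items of $\overline{\pi''}$ lie at or below $\alpha_\pi = I'_{\pi'}[i]$, and $\rank_{\overline{\rho'}}(I'_{\rho'}[i+1]) - 1$ items of $\overline{\rho''}$ lie strictly above $\beta_\rho = I'_{\rho'}[i+1]$ (equivalently $N' - \rank_{\overline{\rho'}}(I'_{\rho'}[i+1])$ lie above it), one gets $\rank_{\overline{\pi''}}(\cdot) = \rank_{\overline{\pi'}}(I'_{\pi'}[i]) + (\text{rank inside new interval})$ and a symmetric identity on the $\rho$ side. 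Combining, the gap realized between $I'_{\pi'}[i]$'s position and the appropriate index on the $\rho$ side equals $\big(\rank_{\overline{\rho'}}(I'_{\rho'}[i+1]) - \rank_{\overline{\pi'}}(I'_{\pi'}[i])\big) + g'' - 1 = g' + g'' - 1$, where the $-1$ accounts for the off-by-one between "rank of $I'_{\rho'}[i+1]$" and "rank just above $\beta_\rho$". Since $g$ is the maximum over all positions in the restricted arrays, $g \ge g' + g'' - 1$.

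The main obstacle, and the step deserving the most care, is the bookkeeping of ranks across the three nested restrictions: ranks with respect to $\overline{\pi'}$ (items of $\pi'$ in $(\ell_\pi,r_\pi)$), with respect to the new interval $(\alpha_\pi,\beta_\pi)$, and with respect to $\overline{\pi''}$ (items of $\pi''$ in $(\ell_\pi, r_\pi)$). I need to verify that (a) the items appended in line~\ref{algLn:stage1} that lie outside the new interval do not get "renamed away" or merged — they stay in the restricted array with unchanged relative order, which follows because $\ds$ is comparison-based and the later appended items all fall in a single sub-subinterval; and (b) the off-by-ones at the interval endpoints $\alpha_\pi = I'_{\pi'}[i]$ and $\beta_\rho = I'_{\rho'}[i+1]$ are tracked consistently, since $I'_{\pi'}[i]$ is itself a boundary of the new $\pi$-interval while $I'_{\rho'}[i+1]$ is a boundary of the new $\rho$-interval, and these two boundary items need not be "the same item" in the indistinguishability sense. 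I expect that carefully writing $\rank_{\overline{\pi''}}(I'_{\pi'}[i]) = \rank_{\overline{\pi'}}(I'_{\pi'}[i])$ (no new items below it inside the interval) and $\rank_{\overline{\rho''}}(I'_{\rho'}[i+1]) = \rank_{\overline{\rho'}}(I'_{\rho'}[i+1]) + N'$ (all $N'$ new items of the second call are strictly below $\beta_\rho$ in $\overline{\rho''}$) — together with the corresponding identities for the interior witness of $g''$ — resolves the arithmetic and yields the claim.
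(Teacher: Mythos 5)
Your overall plan is the right one, and it matches the paper's: take the pair of items witnessing $g''$ inside the new intervals, and translate their ranks back to ranks within $\pi_k$ and $\rho_k$, picking up an extra $g'-1$ because the new intervals sit at the two extreme ends of the $g'$-gap ($\rank_{\pi_k}(\cdot)=\rank_{\pi'_{k-1}}(\alpha_\pi)+\rank_{\pi''_{k-1}}(\cdot)$ on the $\pi$ side and the symmetric identity, shifted by $\rank_{\rho'_{k-1}}(\beta_\rho)-1$, on the $\rho$ side). That arithmetic is exactly the paper's and is fine. The genuine gap is in how you certify that this rank difference is realized by \emph{consecutive} entries of the final restricted arrays $I'_{\pi''}$ and $I'_{\rho''}$, which is what the definition of $g$ requires. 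You assert that $I'_{\pi'}[1],\dots,I'_{\pi'}[i]$ ``survive as a prefix'' of $I'_{\pi''}$ (in particular that $\alpha_\pi=I'_{\pi'}[i]$ is still stored, with the first stored item of the second call right after it), and symmetrically that $I'_{\rho'}[i+1],\dots$ survive as a suffix on the $\rho$ side. Nothing in the comparison-based model forces the summary to retain previously stored items while it processes the second half: $\ds$ may discard $\alpha_\pi$, $\beta_\rho$, or any of those entries at any time (the paper's own running example shows first-half items being forgotten during later leaves). So the ``survival'' claim is unjustified and false in general, and with it your identification of the index pair $(j,j+1)$ collapses.

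This retention issue is precisely what the paper's proof is mostly about. It works only with the final arrays: it sets $a=I''_{\pi''}[m]$, $b=I''_{\rho''}[m+1]$ (the $g''$ witnesses, which may be the enclosing endpoints $\alpha_\pi$ or $\beta_\rho$ and hence possibly not stored), takes $a'$ to be the largest entry of $I'_{\pi''}$ with $a'\le a$, say at position $j$, and proves $b':=I'_{\rho''}[j+1]\ge b$ by a counting argument based on indistinguishability (the number of stored items at or below $\alpha_\pi$ in $I'_{\pi''}$ equals the number at or below $\alpha_\rho$ in $I'_{\rho''}$), with a separate case for $|I''_{\pi''}|=2$, i.e., when no item of the second call is stored at all. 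Since $a'\le a$ and $b'\ge b$, the rank translation then gives $\rank_{\rho_k}(b')-\rank_{\pi_k}(a')\ge g''+g'-1$ at consecutive positions, hence the claim. Note also that the degenerate case you propose to handle separately ($g''=1$) is not the problematic one; the problematic cases are $m=1$ and $m+1=|I''_{\pi''}|$, where the $g''$ witness is an interval endpoint that need not appear in $I'_{\pi''}$ or $I'_{\rho''}$ (and your parenthetical that a stored second-call item exists ``since $g''>1$'' is not valid --- when $|I''_{\pi''}|=2$ no second-call item is stored, yet $g''$ is large). To complete your proof you would need to add an argument of this kind in place of the prefix/suffix survival claim.
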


\begin{proof}
Define $i$ to be \[i:= \argmax_{1\le i' < |I''_{\pi''}|}\, \rank_{\rho''_{k-1}}(I''_{\rho''}[i'+1]) - \rank_{\pi''_{k-1}}(I''_{\pi''}[i']),\] i.e.,
the position of the largest gap in the arrays $I''_{\pi''}$ and $I''_{\rho''}$.
Let $a := I''_{\pi''}[i]$ and $b := I''_{\rho''}[i+1]$ be the two items
whose rank difference determines the largest gap size.
Note that, while $\ds$ stores $a$ and $b$ in $I''_{\pi''}$ and $I''_{\rho''}$,
these two items do not necessarily need to be stored in $I'_{\pi''}$ and $I'_{\rho''}$, respectively.
This may happen for $a$ only in case $a = \alpha_\pi$ and thus $i = 1$,
and similarly, for $b$ only in case $b = \beta_\rho$ and $i = |I''_{\pi''}| - 1$.
Indeed, for $i > 1$, item $a = I''_{\pi''}[i]$
must be in the whole item array $I_{\pi''}$ and thus also in $I'_{\pi''}$,
and similarly, if $i < |I''_{\pi''}| - 1$, item $b = I''_{\rho''}[i+1]$
must be in $I_{\rho''}$ and thus in $I'_{\rho''}$.
(In the special case $|I''_{\pi''}| = 2$, both $a$ and $b$ may not be in
$I'_{\pi''}$ and in $I'_{\rho''}$, respectively, while if $|I''_{\pi''}| > 2$,
at least one of $a$ or $b$ is actually stored.)

Let $j$ be the largest integer such that $I'_{\pi''}[j] \le a$, and let $a' := I'_{\pi''}[j]$;
by the above observations, $a' = a$ unless $i = 1$ and $a\notin I'_{\pi''}$.
Let $b' := I'_{\rho''}[j+1]$. We now show that $b' \ge b$. 
Indeed, this clearly holds if $b' = b$, so suppose $b'\neq b$. 
This may only happen if $b = \beta_\rho$
is not in $I'_{\rho''}$ and $i = |I''_{\pi''}| - 1$. 
We consider two cases:

\mycase{1} If $a' = I'_{\pi''}[j] \in (\alpha_\pi, \beta_\pi)$,
then $I'_{\rho''}[j]\in (\alpha_\rho, \beta_\rho)$ as $\pi''$ and $\rho''$
are indistinguishable and only the last $N_{k-1}$ items are from these intervals.
Moreover, as $i = |I''_{\pi''}| - 1$, index $j$ is the largest such that $I'_{\rho''}[j]\in (\alpha_\rho, \beta_\rho)$,
thus $b' = I'_{\rho''}[j+1] \ge \beta_\rho = b$.

\mycase{2} Otherwise, $a'\le a = \alpha_\pi$,
which may only happen when $i=1$.
As also $i = |I''_{\pi''}| - 1$, we have $|I''_{\pi''}| = 2$, i.e.,
no items from $(\alpha_\pi, \beta_\pi)$ and from $(\alpha_\rho, \beta_\rho)$
are stored in $I'_{\pi''}$ and in $I'_{\rho''}$, respectively.
Then we have $I'_{\pi''}[j+1] \ge \beta_\pi$, by the definition of $j$. 
Before the second recursive call, it holds that
$\alpha_\pi = I'_{\pi'}[\ell]$ and $\beta_\rho = I'_{\rho'}[\ell+1]$ for some index $\ell$,
i.e., there are $\ell$ items stored in $I'_{\pi'}$ and in $I'_{\rho'}$
which are not larger than $\alpha_\pi$ and $\alpha_\rho$,
respectively. By $a' = I'_{\pi''}[j]\le \alpha_\pi$ and by the definition of $j$,
there are $j\le \ell$ items in $I'_{\pi''}$ no larger than $\alpha_\pi$,
and hence, by indistinguishability of $\pi''$ and $\rho''$,
there are $j$ items in $I'_{\rho''}$ no larger than $\alpha_\rho$.
Since no item in $(\alpha_\rho, \beta_\rho)$ is stored in $I'_{\rho''}$, 
we conclude that $b' = I'_{\rho''}[j+1] \ge \beta_\rho = b$ holds.

\smallskip

To prove the claim, it is sufficient to show
\begin{equation}\label{eqn:gapsIneq_1}
\rank_{\rho_k}(b') - \rank_{\pi_k}(a') \ge g' + g'' - 1\,,
\end{equation}
as the difference on the LHS is taken into account in the definition of $g$.
We have
$$
g'' = \rank_{\rho''_{k-1}}(b) -  \rank_{\pi''_{k-1}}(a)
  \le \rank_{\rho''_{k-1}}(b') - \rank_{\pi''_{k-1}}(a')\,,
$$
since $b'\ge b$ and $a' \le a$.
This rank difference is w.r.t.\ substreams $\pi''_{k-1}$ and $\rho''_{k-1}$,
and we now show that when considering $\pi_k$ and $\rho_k$,
the difference increases by $g'-1$. 
Indeed, as $a' < \beta_\pi$ and $b' > \alpha_\rho$, it holds that 
\[\rank_{\rho''_{k-1}}(b') - \rank_{\pi''_{k-1}}(a') = \rank_{\rho_k}(b') - \rank_{\pi_k}(a') - (g' - 1),\]
using the definitions of $g'$ and of the new intervals in lines~\ref{algLn:gapDef}---\ref{algLn:intervalForRho}
of procedure \refineintervals{} (Pseudocode~\ref{alg:refineintervals}).
Summarizing, we have
$
g'' \le \rank_{\rho''_{k-1}}(b') - \rank_{\pi''_{k-1}}(a')
= \rank_{\rho_k}(b') - \rank_{\pi_k}(a') - (g' - 1)
$,
which shows~\eqref{eqn:gapsIneq_1} by rearrangement.
\end{proof}

\subsection{Completing the Proof of Lemma~\ref{lem:spaceVsGap}}

In the inductive proof of~\eqref{eqn:spaceVsGap} for $k > 1$, we sum the bounds 
from the first and second stages of the adversarial strategy, obtained from the induction hypothesis
and halved as we are now averaging over twice many steps.

The key step is to add a simple space lower bound~\eqref{eqn:trivSpaceBoundAvg} for the $N_{k-1}$ items 
from $\pi'_{k-1}$ for any step during processing $\pi''_{k-1}$;
recall that $\pi'_{k-1}$ and $\pi''_{k-1}$ are the items from the first and second recursive call, respectively, in Pseudocode~\ref{alg:advstrat}. %
Consider a prefix $\overline{\pi}$ of $\pi''$ that contains all items from $\pi'_{k-1}$.
First, observe that the largest gap in $I'_{\overline{\pi}}$, the item array $I_{\overline{\pi}}$ restricted to $(\ell_\pi, r_\pi)$, is at most $g$ --- indeed, if it would be larger than $g$, 
further item insertions cannot make the gap smaller as the rank difference in Definition~\ref{def:gap}
cannot decrease by adding more items satisfying property (iii) in Pseudocode~\ref{alg:advstrat}; 
thus, a gap larger than $g$ in $I'_{\overline{\pi}}$ would contradict the definition of $g = \gap\big(\pi'', \rho'',$ $(\ell_\pi, r_\pi),$ $(\ell_\rho, r_\rho)\big)$.
Furthermore, since there can be two gaps of size at most $g$
around stored items from $\pi''_{k-1}$ (i.e., those in $I''_{\overline{\pi}}$),
the number of items from $\pi'_{k-1}$ stored in $I'_{\overline{\pi}}$ is at least
\begin{equation}\label{eqn:spaceFromStage1}
	\frac{N_{k-1} - 2g}{g} = \frac{N_k - 4g}{2g} \ge \frac{(1 - 16\eps)\cdot N_k}{2g}\,,
\end{equation}
using the assumption that $g\le 4\eps N_k$.

Thus, $\Sbar_k$, the average size of the restricted item arrays while processing $\pi''$, is at least
\begin{align}\label{eqn:spaceVsGap-eq1}
	\frac12\Sbar_k + \frac12\Sbar''_{k-1} + \frac{N_k - 16\eps N_k}{2g}
	\ge 
	&\frac12\cdot c\cdot (\log_2 g' + 1)\cdot \left(\frac{N_{k-1}}{g'} - \frac{1}{4\eps}\right) \nonumber \\
	&+
	\frac12\cdot c\cdot (\log_2 g'' + 1)\cdot \left(\frac{N_{k-1}}{g''} - \frac{1}{4\eps}\right) \nonumber \\
	&+
	\frac{(1 - 16\eps)\cdot N_k}{2g}\,,
\end{align}
and our goal is to show that this is at least $c\cdot (\log_2 g + 1)\cdot \left(\frac{N_k}{g} - \frac{1}{4\eps}\right)$.
Multiplying by $2/c$ and substituting $1/c'$ for $(1 - 16\eps) / c$, we need to prove that for a sufficiently small constant $c' > 0$,
\begin{align}\label{eqn:spaceVsGap-eq2}
	(\log_2 g' + 1)\cdot \left(\frac{N_k}{2g'} - \frac{1}{4\eps}\right)
	+
	(\log_2 g'' + 1)\cdot \left(\frac{N_k}{2g''} - \frac{1}{4\eps}\right)
	+
	\frac{N_k}{c'\cdot g}
	\ge
	2\cdot (\log_2 g + 1)\cdot \left(\frac{N_k}{g} - \frac{1}{4\eps}\right)\,,
\end{align}
Since $g\ge g' + g'' - 1\ge g'$ as $g''\ge 1$ and similarly $g\ge g''$,
adding $(\log_2 g' + 1) / (4\eps) + (\log_2 g'' + 1) / (4\eps)$ to the LHS and  $2\cdot (\log_2 g + 1) / (4\eps)$ to the RHS
makes the inequality stronger as we increase the RHS by at least the increase of the LHS.
After dividing by $N_k$, it suffices to show that
\begin{align}\label{eqn:spaceVsGap-eq3}
	\frac{\log_2 g' + 1}{2g'}	+ \frac{\log_2 g'' + 1}{2g''}   +   \frac{1}{c'\cdot g}
	\ge
	\frac{2\cdot (\log_2 g + 1)}{g}\,,
\end{align}
which we show in the remainder of this section.

First, if $g' \le g/4$, then $(\log_2 g' + 1)/(2g') + 1/(c'\cdot g) \ge 2\cdot (\log_2 g + 1) / g$
for $c' < 1/4$, thus inequality~\eqref{eqn:spaceVsGap-eq3} holds,
and similarly for $g'' \le g/4$.
Otherwise, we have that $g', g'' \ge g/4 \ge 2^5$.

For $x > 0$, let $f(x):=\frac{\log_2 x+1}{2x}$.
Since the function $f(x)$ is decreasing for $x\ge 2$ and 
since $g\ge g' + g'' - 1$ by Claim~\ref{clm:gapsIneq}, we can assume that $g = g' + g'' - 1$, 
and our goal is to prove that
\begin{align}\label{eqn:spaceVsGap-eq4}
	f(g') + f(g - g' + 1)   +   \frac{1}{c'\cdot g} \ge	4 f(g)\,,
\end{align}
The first and second derivatives of $f$ are %
\[
f'(x)
= \frac{1-\ln(2x)}{2x^2\ln2}
\qquad \text{and}\qquad
f''(x)
= \frac{\ln(4x^2)-3}{2x^3\ln2}\,,
\]
where $\ln$ is the natural logarithm.
For $x\ge2$ we have $\ln(4x^2)\ge \ln(16)>3$, so $f''(x)>0$ for all $x\ge2$. 
Also for $x\ge2$, it holds that $\ln(2x)>1$, so $f'(x)<0$.
Hence, $f$ is strictly convex and decreasing on $[2,\infty)$.

Consider $\psi(g') := f(g') + f(g - g' + 1)$ as a function of $g'\in [2, g]$ for a given $g$.
Observe that $\psi(g')$ is still strictly convex, as $f$ is strictly convex, and symmetric around $(g+1)/2$.
Thus, it has a unique minimum at $g' = (g+1) / 2$, and the LHS of~\eqref{eqn:spaceVsGap-eq4} is minimized
when $g' = (g+1) / 2$. Therefore, we only need to show that
\[
	2\cdot f\left(\frac{g+1}{2}\right)  +   \frac{1}{c'\cdot g} \ge	4 f(g)\,,
\]
Using the definition of $f$ and that $\log_2 \frac{g+1}{2} \ge (\log_2 g) - 1$, we get that the LHS is at least
\[
2\cdot \frac{\log_2 \frac{g+1}{2} + 1}{2\cdot \frac{g+1}{2}}  +   \frac{1}{c'\cdot g}
\ge
2\cdot \frac{\log_2 g}{g+1}  +   \frac{1}{c'\cdot g}
\ge 4\cdot \frac{\log_2 g + 1}{2g} = 4f(g)\,,
\]
where the second inequality holds for $c' < 1/5$, using that $g > 2^7$.
Using that $c' = c / (1 - 16\eps)$, it is sufficient to take $c = 1/16 - 2\eps$.
This concludes the proof of Lemma~\ref{lem:spaceVsGap}, and the space lower bound follows.

\eat{
\subsection{More details from ChatGPT (probably too detailed and long)}

We want to prove that there exists a constant $c'>0$ such that for all positive 
$g,g',g''$ with
\[
g\ge g'+g''-1 \quad\text{and}\quad g\ge 2^7=128
\]
we have
\[
\frac{\log g'+1}{2g'}+\frac{\log g''+1}{2g''}+\frac1{c'g}
\;\ge\;
\frac{2(\log g+1)}{g}.
\tag{$\ast$}
\]

We will show that it holds, for example, with $c'=\frac13$.

\medskip
\noindent\textbf{1. A convenient function.}
Define
\[
f(x):=\frac{\log x+1}{2x},\qquad x>0.
\]
Then the inequality becomes
\[
f(g')+f(g'')+\frac1{c'g} \;\ge\; \frac{2(\log g+1)}{g}.
\]

Compute the first and second derivatives (using natural logarithms $\ln$):
\[
f'(x)
= \frac{1-\ln(2x)}{2x^2\ln2},
\qquad
f''(x)
= \frac{\ln(4x^2)-3}{2x^3\ln2}.
\]

For $x\ge2$ we have $\ln(4x^2)\ge \ln(16)>3$, so
\[
f''(x)>0\quad\text{for all }x\ge2,
\]
hence $f$ is \emph{convex} on $[2,\infty)$.
Also for $x\ge2$,
\[
\ln(2x)\ge\ln4>1\;\Rightarrow\;1-\ln(2x)<0,
\]
so $f'(x)<0$ and $f$ is \emph{decreasing} on $[2,\infty)$.

On $[1,2]$ one checks directly
\[
f(1)=\frac12,\qquad f(2)=\frac12,\qquad f(x)>\frac12\text{ for }1<x<2,
\]
so in particular
\[
f(x)\ge\frac12\quad\text{for all }1\le x\le2.
\tag{1}
\]

\medskip
\noindent\textbf{2. Reducing the domain of $(g',g'')$.}
Fix $g\ge128$. Define
\[
A(g',g'') := f(g')+f(g'')
\]
under the constraint $g\ge g'+g''-1$, i.e.
\[
g'+g''\le g+1.
\tag{2}
\]

\smallskip
\emph{Step 1: the sum constraint is tight at a minimizer.}

Because $f$ is decreasing on $[1,\infty)$ (at least on integers, and on all
reals for $x\ge 1.4$), increasing either $g'$ or $g''$ (while the other is
fixed) can only \emph{decrease} $A(g',g'')$, as long as we stay within the
constraint (2). Therefore any minimizer of $A(g',g'')$ subject to (2) must
satisfy
\[
g'+g''=g+1.
\tag{3}
\]

So for fixed $g$ it is enough to minimize
\[
\psi(x) := f(x)+f(g+1-x),\qquad 1\le x\le g,
\]
understanding $g'=x$ and $g''=g+1-x$.

\smallskip
\emph{Step 2: we can assume $g',g''\ge2$.}

If $x<2$ then by (1) we have $f(x)\ge\frac12$. 

Now $g\ge128$ implies $g+1\ge129$, hence
\[
\frac{g+1}2 \ge 64.5.
\]
At the “balanced” point $x=\frac{g+1}{2}$ we have
\[
\psi\!\left(\frac{g+1}{2}\right) = 2f\!\left(\frac{g+1}{2}\right).
\]
Since $f$ is decreasing and $\frac{g+1}{2}\ge64.5$, we may bound
\[
2f\!\left(\frac{g+1}{2}\right)
\le 2f(64.5)
< 0.11
< \frac12.
\]
Thus any pair with $x<2$ has $\psi(x)>0.5>\psi((g+1)/2)$ and therefore cannot
be a minimizer.

The same argument applies symmetrically if $g+1-x<2$. Hence at a minimizer we
must have
\[
2\le x\le g-1,\qquad 2\le g+1-x\le g-1,
\]
i.e. both $g',g''\ge2$.

\smallskip
\emph{Step 3: convexity gives the balanced split.}

On $[2,g-1]$ we have $x\ge2$ and $g+1-x\ge2$, so
\[
\psi''(x)=f''(x)+f''(g+1-x) > 0.
\]
Thus $\psi$ is strictly convex on $[2,g-1]$.
Moreover $\psi$ is symmetric around $x=\frac{g+1}{2}$:
\[
\psi\!\left(\frac{g+1}{2}+t\right)
= f\!\left(\frac{g+1}{2}+t\right)
+ f\!\left(\frac{g+1}{2}-t\right)
= \psi\!\left(\frac{g+1}{2}-t\right).
\]
By symmetry we have $\psi'\!\left(\frac{g+1}{2}\right)=0$, and by convexity
this point is the (unique) minimum.

Therefore, for all admissible $g',g''$,
\[
A(g',g'')=f(g')+f(g'')
\;\ge\;
2f\!\left(\frac{g+1}{2}\right).
\tag{4}
\]

\medskip
\noindent\textbf{3. A one-variable lower bound.}

Combining (4) with the inequality we want, it is enough to prove that for
$g\ge128$,
\[
2f\!\left(\frac{g+1}{2}\right)
\;\ge\;
\frac{2(\log g+1)}{g} - \frac{1}{c'g}.
\]
Multiplying by $g$ and rearranging, this is
\[
g\cdot 2f\!\left(\frac{g+1}{2}\right)
\;\ge\;
2(\log g+1) - \frac{1}{c'}.
\]
Using the definition of $f$ we get
\[
g\cdot 2f\!\left(\frac{g+1}{2}\right)
= \frac{2g}{g+1}\bigl(\log\tfrac{g+1}{2}+1\bigr).
\]
So we need a constant $C>0$ such that for all $g\ge128$,
\[
\frac{2g}{g+1}\bigl(\log\tfrac{g+1}{2}+1\bigr)
\;\ge\;
2(\log g+1) - C.
\tag{5}
\]
Then we can take $1/c'\ge C$.

Define
\[
\varphi(g)
:= \frac{2g}{g+1}\bigl(\log\tfrac{g+1}{2}+1\bigr) - 2(\log g+1).
\]
Then (5) is equivalent to $\varphi(g)\ge -C$.

\smallskip
\emph{Derivative of $\varphi$.}
Switching to natural logs via $\log x = \frac{\ln x}{\ln 2}$ and simplifying
symbolically, one finds
\[
\varphi'(g)
=
\frac{2\bigl(g\ln(g+1)-2g-1\bigr)}{g(g+1)^2\ln2}.
\]
Thus $\varphi'(g)>0$ exactly when
\[
g\ln(g+1) > 2g+1.
\]
For $g\ge128$ we have $g+1\ge129$ and
\[
\ln(g+1)\ge\ln129>4.8>2+\frac1g,
\]
so
\[
g\ln(g+1) > g\left(2+\frac1g\right)=2g+1,
\]
hence $\varphi'(g)>0$ for all $g\ge128$. Therefore $\varphi$ is
\emph{increasing} on $[128,\infty)$.

Moreover,
\[
\lim_{g\to\infty}\varphi(g)
=2\bigl(\log\tfrac{1}{2}+1\bigr)-2
=2(-1+1)-2=-2.
\]
Since $\varphi$ is increasing and approaches $-2$ from below, its minimum on
$[128,\infty)$ is at $g=128$. A direct numerical evaluation gives
\[
\varphi(128)\approx -2.086 > -3.
\]
Hence for all $g\ge128$,
\[
\varphi(g)\ge \varphi(128)>-3,
\]
so (5) holds with $C=3$:
\[
\frac{2g}{g+1}\bigl(\log\tfrac{g+1}{2}+1\bigr)
\;\ge\;
2(\log g+1) - 3
\qquad (g\ge128).
\tag{6}
\]

\medskip
\noindent\textbf{4. Finishing the proof.}

From (4) and (6) we obtain, for all admissible $g,g',g''$,
\[
g\bigl(f(g')+f(g'')\bigr)
\;\ge\;
g\cdot 2f\!\left(\frac{g+1}{2}\right)
\;\ge\;
2(\log g+1)-3.
\]
Equivalently,
\[
f(g')+f(g'') 
\;\ge\;
\frac{2(\log g+1)}{g} - \frac{3}{g}.
\]
Now choose any $c'>0$ with
\[
\frac1{c'}\ge 3,
\]
for instance $c'=\frac13$. Then
\[
f(g')+f(g'') +\frac1{c'g}
\;\ge\;
\frac{2(\log g+1)}{g} - \frac{3}{g} + \frac{1}{c'g}
\;\ge\;
\frac{2(\log g+1)}{g}.
\]
In terms of the original expression this is exactly
\[
\frac{\log g' + 1}{2g'}
\;+\;
\frac{\log g'' + 1}{2g''}
\;+\;
\frac{1}{c'\, g}
\;\ge\;
\frac{2(\log g + 1)}{g},
\]
for all $g\ge128$ and all positive $g',g''$ with $g\ge g'+g''-1$.

Thus the inequality holds for all such $g,g',g''$ as soon as $c'$ is chosen
small enough, e.g. $c'=\dfrac13$.
\qed
}

\section{Corollaries and Conclusions}\label{sec:conclusions}

Our construction closes the asymptotic gap in the space bounds for
deterministic comparison-based quantile summaries and yields
the optimality of the Greenwald and Khanna's quantile summary~\cite{greenwald01_quantile_summaries}.
A drawback of their quantile summary is that it carries out an intricate 
merging of stored tuples, where each tuple consists of a stored item
together with lower and upper bounds on its rank. A simplified (greedy) version, which merges stored tuples
whenever it is possible, was suggested already in~\cite{greenwald01_quantile_summaries},
and according to experiments reported in Luo \etal~\cite{luo16_quantiles_experimental},
it performs better in practice than the intricate algorithm analyzed in~\cite{greenwald01_quantile_summaries}.
It is an interesting open problem whether or not the upper bound
of $\O(\oneovereps\cdot \log \eps N)$ holds for the greedy variant
of the Greenwald and Khanna's algorithm; 
recently, Gribelyuk \etal~\cite{GribelyukSWY24} showed that another greedy variant
of the algorithm does allow for a simple analysis.

\subsection{Finding an Approximate Median}
One of the direct consequences of our result is that finding
an $\eps$-approximate median requires roughly the same space as constructing a quantile summary.
(This can be done similarly for any other $\phi$-quantile as long as $\eps \ll \phi \ll 1-\eps$.)

\begin{theorem}\label{thm:apxMedian}
For any $\eps > 0$ small enough, 
there is no deterministic comparison-based streaming algorithm 
that finds an $\eps$-approximate median in the stream
and runs in space $o(\oneovereps\cdot \log \eps N)$ on any stream of length $N$.
\end{theorem}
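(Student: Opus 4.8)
The plan is to reduce from Theorem~\ref{thm:main}: any space-efficient $\eps$-approximate median algorithm would yield a space-efficient $\eps'$-approximate quantile summary for a related precision $\eps'$, contradicting our main lower bound. Suppose for contradiction that there is a deterministic comparison-based streaming algorithm $\mathcal{A}$ that reports an $\eps$-approximate median using only $o(\oneovereps\cdot \log \eps N)$ space on streams of length $N$. The idea is a standard padding trick: to extract an approximate $\phi$-quantile of a stream $\sigma$ of length $n$, we append a carefully chosen number of ``extreme'' items (items larger than everything seen, or smaller than everything seen) so that the median of the padded stream coincides with the $\phi$-quantile of $\sigma$. Concretely, to query the $\phi$-quantile with $\phi \le \half$, we would append $(1 - 2\phi)n / (2\phi)$ copies of a huge item, stretching the stream to length $n' = n/(2\phi)$, so that the median of the padded stream is the $\lfloor \phi n\rfloor$-th smallest item of $\sigma$ (up to rounding); symmetrically for $\phi > \half$ we pad with tiny items. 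Running $\mathcal{A}$ on the padded stream and returning its answer gives an $\eps$-approximate median of the padded stream, which translates to an $(\eps/(2\phi))$-approximate, hence $O(\eps)$-approximate, $\phi$-quantile of $\sigma$ whenever $\phi$ is bounded away from $0$ and $1$ — but for $\phi$ near the endpoints the multiplicative blow-up is a problem, which is why the theorem restricts attention to the median (or, as the parenthetical notes, any fixed $\phi$ with $\eps \ll \phi \ll 1-\eps$).

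The subtlety is that a single padded stream only answers one value of $\phi$, whereas a quantile summary must answer all $\phi$ simultaneously with one pass. To handle this with a single pass over $\sigma$, I would observe that the padding items are all appended \emph{after} $\sigma$, so the algorithm's state after reading $\sigma$ is fixed; we then need $\mathcal{A}$ to support being ``forked'' — continued from that state along many different padding suffixes. Since $\mathcal{A}$ is deterministic and comparison-based, its memory state after $\sigma$ is a well-defined object of size $o(\oneovereps \log\eps n)$, and for each target rank we can deterministically simulate the continuation on the appropriate suffix of extreme items. Thus from one pass and one stored state we can answer every quantile query $\phi$ in a bounded middle range, say $\phi \in [\frac14, \frac34]$, each to additive error $O(\eps)$. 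Rescaling $\eps$ by the constant factor and noting that covering $[\frac14,\frac34]$ to precision $O(\eps)$ already forces $\Omega(\oneovereps \log \eps n)$ storage by Theorem~\ref{thm:main} (the adversarial construction there can be embedded into the middle half of a longer stream, or simply invoked with a rescaled $\eps$), we reach a contradiction with the assumed space bound of $\mathcal{A}$.

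The main obstacle is making the ``fork the state and simulate many suffixes'' step rigorous within the comparison-based model while not cheating on the space accounting: we must be sure that the state of $\mathcal{A}$ after reading $\sigma$ genuinely has size $o(\oneovereps\log\eps n)$ and that all the simulations run on suffixes consisting solely of the two extreme items, so that the comparison-based restriction is respected (the extreme items compare consistently with everything in $\sigma$). I would also need to take a little care with rounding: $\lfloor \phi n\rfloor$ versus the exact halfway point of the padded stream differ by $O(1)$, which is absorbed into the $\eps n$ slack as long as $\eps n = \Omega(1)$, i.e. for $n$ large enough, matching the ``for $\eps$ small enough / $N$ large enough'' hypothesis. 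Once these bookkeeping points are settled, the contradiction with Theorem~\ref{thm:main} — specialized to the rescaled precision $\Theta(\eps)$ and the middle range of quantiles — is immediate, and the theorem follows.
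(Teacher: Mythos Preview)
Your approach differs from the paper's. The paper does not build a quantile summary out of the median algorithm; instead it runs the adversarial construction of Section~\ref{sec:construction} directly on the median algorithm $\mathcal{A}$ and splits into two cases on the resulting gap $g$. If $g \le 4\eps N_k$, the space-gap inequality (Lemma~\ref{lem:spaceVsGap}, with slightly adjusted constants) already forces $\Omega(\oneovereps \log \eps N_k)$ stored items --- that lemma nowhere uses that the algorithm is a quantile summary. Otherwise $g > 4\eps N_k$, so there is some $\phi'$ at which no stored item is a $2\eps$-approximate $\phi'$-quantile; the paper then appends $|1-2\phi'|\cdot N_k \le N_k$ extreme items \emph{once} to shift that $\phi'$ to the median position, and $\mathcal{A}$ now fails on the median query. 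Your black-box reduction is a legitimate alternative that avoids reopening the space-gap analysis, at the cost of the forking/simulation argument you flag.

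However, your padding formula is wrong, and the error is precisely what creates the complications you then try to work around. You have the direction reversed: for $\phi < \half$ the $\phi$-quantile lies \emph{below} the median, so you must append \emph{small} items (to push it up), not huge ones. Concretely, append $m = n(1-2\phi)$ items below $\min(\sigma)$, giving $n' = 2n(1-\phi) \le 2n$; symmetrically, for $\phi > \half$ append $n(2\phi - 1)$ items above $\max(\sigma)$, giving $n' = 2n\phi \le 2n$. In both cases $n' \le 2n$, so the additive error is $\eps n' \le 2\eps n$, and you obtain a $2\eps$-approximate $\phi$-quantile for \emph{every} $\phi \in [0,1]$ (if the simulation outputs a padding item, which can only happen for $\phi < 2\eps$ or $\phi > 1-2\eps$, return $\min(\sigma)$ or $\max(\sigma)$ instead). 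There is no ``multiplicative blow-up'' at the endpoints, no need to restrict to $[\tfrac14,\tfrac34]$, and no need to ``embed the construction into the middle half'' --- the last of which would require reopening the proof of Theorem~\ref{thm:main} rather than citing it, defeating the point of a black-box reduction. With the corrected padding your argument goes through cleanly: the simulation at query time is determined by $|I_\sigma|$ and $G_\sigma$ alone (all comparisons with padding items have predetermined outcomes), the returned item always lies in $I_\sigma \cup \{\min(\sigma),\max(\sigma)\}$, so the resulting summary satisfies Definition~\ref{def:comparisonBased} and Theorem~\ref{thm:main} applies directly.
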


\begin{proof}[Proof sketch.]
Let $\ds$ be a deterministic comparison-based data structure
that finds an $\eps$-approximate median.
Consider the streams $\pi$ and $\rho$ constructed by 
the adversarial procedure from Section~\ref{sec:construction}, i.e.,
$(\pi, \rho) = \advstrat$ $(k,$ $\emptyset, \emptyset,$ $(-\infty, \infty),$ $(-\infty, \infty))$.
Let $g = \gap(\pi, \rho)$.
If $g \le 4\eps N_k$, then the analysis in Section~\ref{sec:spaceGapInquality},
with an appropriately adjusted space-gap inequality,
shows that the algorithm uses space $\Omega(\oneovereps\cdot \log \eps N_k)$.
Thus, consider the case $g > 4\eps N_k$, which implies
that there exists $\phi'\in (0,1)$ such that the item array does not store a $2\eps$-approximate $\phi'$-quantile.
If $\phi' < 0.5$, we append $(1-2\phi')\cdot N_k \le N_k$ items to streams $\pi$ and $\rho$ that are smaller than
any item appended so far, and after that the algorithm cannot return 
an $\eps$-approximate median. Otherwise, $\phi' \ge 0.5$ and 
we append $(2\phi'-1)\cdot N_k \le N_k$ items to streams $\pi$ and $\rho$ that are larger than
any item appended so far.
Thus, in this case also an $\eps$-approximate median is not stored.
\end{proof}

\subsection{Estimating Rank}
We now consider data structures for the following \textsc{Estimating Rank} problem,
which is closely related to computing $\eps$-approximate quantiles:
The input arrives as a stream of $N$ items from a totally ordered universe $U$,
and the goal is to design a data structure with small space cost which is able to
provide an $\eps$-approximate rank for any query $q\in U$, i.e.,
the number of items in the stream which are not larger than $q$, up to an additive error of $\pm \eps N$.
Our construction directly implies a space lower bound for comparison-based rank data structures,
which are defined similarly as in Definition~\ref{def:comparisonBased}.\footnote{
We only need to replace item \textit{(iv)} of Definition~\ref{def:comparisonBased} by 
\textit{(iv) Given a query $q\in U$, the computation of $\ds$ is determined solely by the results of
comparisons between $q$ and $I[j]$, for $j= 1,\dots,|I|$,
the number of items stored, and the contents of $G$.}
}

\begin{theorem}\label{thm:estimatingRank}
For any $\eps > 0$ small enough,
there is no deterministic comparison-based data structure for \textsc{Estimating Rank}
which stores $o(\oneovereps\cdot \log \eps N)$ items on any input stream of length $N$.
\end{theorem}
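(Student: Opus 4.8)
The plan is to reduce the \textsc{Estimating Rank} problem to the $\eps$-approximate quantile summary problem, so that the lower bound follows directly from Theorem~\ref{thm:main} together with the construction of Section~\ref{sec:construction}. The key observation is that an $\eps$-approximate rank estimator is, in a precise sense, at least as powerful as an $\eps$-approximate quantile summary: given a data structure $\ds$ that answers $\eps$-approximate rank queries, and a stored item array $I$ with $I[1]\le I[2]\le\cdots$, one can answer a quantile query $\phi\in[0,1]$ by finding a stored item $I[j]$ whose estimated rank is close to $\phi N$ and returning it. However, since the structure only returns items it has stored, and since it is comparison-based, the right approach is not to build a black-box reduction but to rerun the adversarial argument verbatim.

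Concretely, I would reuse the streams $\pi$ and $\rho$ produced by $\advstrat(k, \emptyset, \emptyset, (-\infty,\infty), (-\infty,\infty))$ and the entire machinery of Sections~\ref{sec:indistinguishableStreams}--\ref{sec:spaceGapInquality}, observing that the only place where the ``quantile summary'' hypothesis is used is Lemma~\ref{lem:indistStreamsGapUB}, which bounds $\gap(\pi,\rho)\le 2\eps N$. I would prove the analogous statement for \textsc{Estimating Rank}: if $\gap(\pi,\rho) > 2\eps N$, then there is an index $i$ with $\rank_\rho(I_\rho[i+1]) - \rank_\pi(I_\pi[i]) > 2\eps N$; picking a query $q$ strictly between $I_\pi[i]=I_\rho[i]$ (in the renamed sense) and $I_\pi[i+1]$ — using that $\pi$ and $\rho$ are indistinguishable and $\ds$ is comparison-based, so $\ds$ must return the \emph{same} numerical estimate $\hat r$ on query $q$ regardless of whether the stream was $\pi$ or $\rho$ — we find that $\hat r$ cannot be within $\eps N$ of $\rank_\pi(q)$ and simultaneously within $\eps N$ of $\rank_\rho(q)$, since these two true ranks differ by more than $2\eps N$. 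This contradiction gives $\gap(\pi,\rho)\le 2\eps N$, exactly as in Lemma~\ref{lem:indistStreamsGapUB}. With this single lemma replaced, Lemma~\ref{lem:spaceVsGap} and the derivation of $S_k = \Omega(\oneovereps\cdot\log\eps N_k)$ go through unchanged.

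The only technical point requiring care is that the comparison-based model for \textsc{Estimating Rank} (the variant in the footnote, with item \textit{(iv)} modified so that the computation on query $q$ depends only on the comparisons of $q$ against $I[1],\dots,I[|I|]$, on $|I|$, and on $G$) must indeed force $\ds$ to output the same value on $\pi$ and on $\rho$ when $q$ is chosen in the ``gap''. This holds because for a query $q$ placed strictly between two consecutive stored items (and, by Observation~\ref{obs:refineIntervalsOutput}-style reasoning, consistently placed relative to the full arrays), the vector of comparison outcomes $(q \le I[j])_{j}$ is identical for $\pi$ and $\rho$, and $|I_\pi|=|I_\rho|$, $G_\pi = G_\rho$ by indistinguishability; hence the deterministic computation produces the same estimate. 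I would state this as a short lemma mirroring the remark after Definition~\ref{def:indistinguishableStreams}.

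I expect the main (and only real) obstacle to be bookkeeping rather than mathematical: verifying that the chosen query $q$ lies strictly between the relevant stored items \emph{and} that it has not appeared in either stream, so that the comparison pattern is genuinely well-defined and identical across $\pi$ and $\rho$ — this is where the continuous-universe assumption is used, exactly as in the main construction. Once that is in place, the rest is a verbatim invocation of Lemma~\ref{lem:spaceVsGap}: since the space-gap inequality~\eqref{eqn:spaceVsGap} never used property \textit{(iv)} of Definition~\ref{def:comparisonBased} — it only used indistinguishability and the bound $\gap(\pi,\rho)\le 2\eps N$ — the conclusion $S_k\ge c\cdot(\log_2 2\eps N_k + 1)\cdot\frac{1}{4\eps} = \Omega(\oneovereps\cdot\log\eps N)$ follows immediately, proving Theorem~\ref{thm:estimatingRank}.
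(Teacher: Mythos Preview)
Your approach is essentially the paper's: rerun the adversarial construction and replace only Lemma~\ref{lem:indistStreamsGapUB} by an \textsc{Estimating Rank} analogue, leaving the space-gap inequality untouched. The one detail the paper handles more explicitly than your sketch is the query placement: it takes two queries at the \emph{extremes} of the gap, $q_\pi\in(I_\pi[i],\next(\pi,I_\pi[i]))$ and $q_\rho\in(\prev(\rho,I_\rho[i+1]),I_\rho[i+1])$, so the true ranks are $\rank_\pi(I_\pi[i])+1$ and $\rank_\rho(I_\rho[i+1])-1$, which differ by $g-2$, yielding $g\le 2\eps N_k+2$; a query placed at an arbitrary point between consecutive stored items need not produce ranks differing by more than $2\eps N$, so your claim ``these two true ranks differ by more than $2\eps N$'' requires exactly this choice --- but this is precisely the bookkeeping you anticipated, and the extra~$+2$ is absorbed in the final calculation.
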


\begin{proof}[Proof sketch.]
Let $\ds$ be a deterministic comparison-based data structure for \textsc{Estimating Rank}.
Consider again the pair of streams
$(\pi, \rho) = \advstrat(k, \emptyset, \emptyset, (-\infty, \infty), (-\infty, \infty))$.
Let $g = \gap(\pi, \rho)$.
The space-gap inequality (Lemma~\ref{lem:spaceVsGap}) holds, using the same proof.
As shown at the beginning of Section~\ref{sec:spaceGapInquality},
if $g\le 2\eps N_k + 2$, then $\ds$ needs to store $\Omega(\oneovereps\cdot \log \eps N_k)$ items
(the $+2$ makes no effective difference in the calculation).
It remains to observe that if $\ds$ provides an $\eps$-approximate rank of any query $q\in U$,
then $g\le 2\eps N_k + 2$.

Indeed, suppose for a contradiction that $g > 2\eps N_k + 2$, which implies that
there is
$1\le i < |I_\pi| = |I_\rho|$ such that $\rank_\rho(I_\rho[i+1]) - \rank_\pi(I_\pi[i]) > 2\eps N_k + 2$.
Let $q_\pi$ be an item which lies in $(I_\pi[i], \next(\pi, I_\pi[i]))$, that is,
just after $I_\pi[i]$ in $U$ ($q_\pi$ exists by our continuity assumption).
Similarly, let $q_\rho$ be an item in $(\prev(\rho, I_\rho[i+1]), I_\rho[i+1])$.
Let $r$ be the rank returned by $\ds$ when run on query $q_\pi$ after processing stream $\pi$.
Observe that $\ds$ returns $r$ also on query $q_\rho$ after processing stream $\rho$,
since $\pi$ and $\rho$ are indistinguishable, $\ds$ is comparison-based, and
the results of comparisons with stored items are the same in both cases.
However, the true ranks satisfy
$\rank_\pi(q_\pi) = \rank_\pi(I_\pi[i]) + 1$ and
$\rank_\rho(q_\rho) = \rank_\rho(I_\rho[i+1]) - 1$, thus
$\rank_\rho(q_\rho) - \rank_\pi(q_\pi) > 2\eps N_k$. It follows that $r$ differs
from $\rank_\pi(q_\pi)$ or from $\rank_\rho(q_\rho)$ by more than $\eps N_k$,
which is a contradiction.
\end{proof}

\subsection{Randomized Algorithms}
\label{sec:randomlb}
We now turn our attention to randomized quantile summaries, which may fail to
provide an $\eps$-approximate $\phi$-quantile, for some $\phi$, with 
probability bounded by a parameter $\delta$. 
Karnin \etal~\cite{karnin16_optimal_rand_quantile_summaries} designed a
randomized com\-pa\-ri\-son-ba\-sed quantile summary with storage cost $\O(\oneovereps\cdot \log \log \frac{1}{\eps\delta})$.
They also proved the matching lower bound,
which however holds only for a certain stream length (depending on $\eps$)
and for $\delta$ exponentially close to $0$. We state it more precisely as follows.

\begin{theorem}[Theorem~6 in~\cite{karnin16_optimal_rand_quantile_summaries}]\label{thm:randomizedPrev}
For any $\eps > 0$ small enough,
there is no randomized com\-pa\-ri\-son-ba\-sed $\eps$-approximate quantile summary
with failure probability less than $\delta = 1 / N!$,
which stores $o(\oneovereps\cdot \log \log \frac{1}{\delta})$ items on any input stream of length
$N = \Theta\left(\oneoverepssquared\cdot \log^2 \oneovereps\right)$.
\end{theorem}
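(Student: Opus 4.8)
The plan is to prove Theorem~\ref{thm:randomizedPrev} by a standard derandomisation (a Yao-type union bound), reducing it to the deterministic construction of Section~\ref{sec:construction}. Fix $\eps$ small enough, let $k$ be the integer with $2^k$ closest to $\oneovereps\log^2\oneovereps$, and set $N := N_k = \oneovereps\cdot 2^k$, so that $N = \Theta\!\left(\oneoverepssquared\log^2\oneovereps\right)$ and $k = \Theta(\log\oneovereps)$ as required by the statement. Suppose for contradiction that $\ds$ is a randomised comparison-based $\eps$-approximate quantile summary that, on every input stream of length $N$, fails to correctly answer some query with probability at most $\delta = 1/N!$, and that stores $o\!\left(\oneovereps\cdot\log\log\oneoverdelta\right)$ items for every outcome of its coins. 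For a fixed random string $r$ let $\ds_r$ be the resulting deterministic comparison-based summary; by the worst-case space assumption every $\ds_r$ uses $o\!\left(\oneovereps\cdot\log\log\oneoverdelta\right)$ items.

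The crucial point — and the only place where the comparison-based assumption is really used here — is that there are only \emph{finitely many} relevant inputs. Since $\ds_r$ only compares items and never inspects their identities, its behaviour on a length-$N$ stream, and whether it correctly answers every query $\phi\in[0,1]$, depends solely on the relative order of the stream; among streams with pairwise distinct items there are exactly $N!$ such order types, and distinct-item streams are the only ones we shall need, because the adversary of Section~\ref{sec:construction} always produces streams with distinct items. For each such order type $\sigma$ we have $\Pr_r[\ds_r\text{ errs on }\sigma] < \delta = 1/N!$, so by a union bound $\Pr_r[\exists\,\sigma:\ \ds_r\text{ errs on }\sigma] < N!\cdot\frac1{N!} = 1$. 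Hence there is a fixed string $r^{*}$ for which $\ds_{r^{*}}$ answers every query correctly on \emph{every} length-$N$ stream with distinct items; that is, $\ds_{r^{*}}$ is a genuine deterministic comparison-based $\eps$-approximate quantile summary (for the purposes of distinct-item inputs) using $o\!\left(\oneovereps\cdot\log\log\oneoverdelta\right)$ space.

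Now run the adversarial procedure $\advstrat$ $(k, \emptyset, \emptyset, (-\infty, \infty), (-\infty, \infty))$ against the fixed deterministic algorithm $\ds_{r^{*}}$. The two streams $\pi,\rho$ it produces have length $N_k = N$ and distinct items, so $\ds_{r^{*}}$ is correct on them; thus Lemma~\ref{lem:indistStreamsGapUB} applies and, through the space-gap inequality (Lemma~\ref{lem:spaceVsGap}), $\ds_{r^{*}}$ stores $\Omega\!\left(\oneovereps\cdot\log\eps N_k\right)$ items. Since $\eps N_k = \Theta\!\left(\oneovereps\log^2\oneovereps\right)$ we get $\log\eps N_k = \Theta(\log\oneovereps)$, hence $\ds_{r^{*}}$ uses $\Omega\!\left(\oneovereps\cdot\log\oneovereps\right)$ space. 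It remains to rewrite this in terms of $\delta$: as $\delta = 1/N!$, Stirling gives $\log\oneoverdelta = \log N! = \Theta(N\log N)$, so $\log\log\oneoverdelta = \Theta(\log N) = \Theta(\log\oneovereps)$ because $N = \poly(\oneovereps)$. Therefore $\ds_{r^{*}}$ uses $\Omega\!\left(\oneovereps\cdot\log\log\oneoverdelta\right)$ space, contradicting the assumption. (This also explains the particular choice $N = \Theta(\oneoverepssquared\log^2\oneovereps)$: it is calibrated precisely so that $\log\eps N$ and $\log\log\oneoverdelta$ are both $\Theta(\log\oneovereps)$, making the two sides of the claimed bound match.)

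The step I expect to be the main obstacle is the one flagged above: making the union bound range over a finite family, for which the comparison-based model is essential — over an arbitrary universe the set of inputs is uncountable and $\delta = 1/N!$ would carry no weight. A secondary technicality is the notion of ``space'' for the randomised algorithm: the argument as written needs the space bound to hold for \emph{every} outcome of the coins, so that fixing $r^{*}$ preserves it; if only an expected-space guarantee were available, one would combine the failure-probability union bound with Markov's inequality applied to the space, tolerating a constant-factor loss and a slightly smaller admissible $\delta$, after which the rest of the argument goes through unchanged.
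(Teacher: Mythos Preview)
Your argument is correct and follows the same derandomisation-by-union-bound strategy the paper describes: bound over the $N!$ order types, fix a good random string, and apply a deterministic comparison-based lower bound. The only difference is in the black box invoked at the end: the paper (following Karnin \etal, who proved this theorem) plugs in the Hung--Ting $\Omega(\oneovereps\log\oneovereps)$ bound from~\cite{hung10_qs_lower_bound}, whereas you plug in this paper's own construction via Lemma~\ref{lem:spaceVsGap}. At the chosen stream length $N=\Theta(\oneoverepssquared\log^2\oneovereps)$ both yield $\Omega(\oneovereps\log\oneovereps)$, so the substitution is harmless --- though note that, since Theorem~\ref{thm:randomizedPrev} is stated as a prior result, the paper's version is the historically faithful one, and your use of Section~\ref{sec:construction} is really the content of the paper's \emph{subsequent} strengthening (the unnumbered theorem after Theorem~\ref{thm:randomizedPrev}).
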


The proof follows from reducing the randomized case
to the deterministic case and using the lower bound of
$\Omega(\oneovereps\cdot \log \oneovereps)$~\cite{hung10_qs_lower_bound},
which holds for streams of length $N = \Theta\left(\oneoverepssquared\cdot \log^2 \oneovereps\right)$.
Suppose for a contradiction that there exists a comparison-based $\eps$-approximate quantile summary
which stores $o(\oneovereps\cdot \log \log \frac{1}{\delta})$ items for $\delta = 1 / N!$.
Note that if failure probability is below $1 / N!$, a randomized comparison-based quantile summary succeeds 
simultaneously for all streams of length $N$ with probability $> 0$ (by the union bound).
More precisely, it succeeds for all permutations of any given set of $N$
distinct items, which is sufficient in the comparison-based model.
Thus, there exists a choice of random bits which provides a correct result for
all streams of length $N$. Hard-coding these bits, we obtain a deterministic algorithm
running in space 
$o(\oneovereps\cdot \log \log \frac{1}{\delta}) = o(\oneovereps\cdot \log \log e^{N \log N}) =  o(\oneovereps\cdot \log N)
= o(\oneovereps\cdot \log \oneovereps)$,
which contradicts the lower bound in~\cite{hung10_qs_lower_bound}.
We remark that the lower bound holds even for finding the median.

Using our lower bound of $\Omega(\oneovereps\cdot \log \eps N)$ for deterministic
quantile summaries, we strengthen the randomized lower bound so that it holds
for any stream length $N$, which in turn gives a higher space bound.
Hence, using the same proof, we obtain:

\begin{theorem}
There is no randomized comparison-based $\eps$-app\-ro\-xi\-ma\-te quantile summary
with failure probability less than $\delta = 1 / N!$,
which stores $o(\oneovereps\cdot \log \log \frac{1}{\delta})$ items on any input stream of length $N$.
\end{theorem}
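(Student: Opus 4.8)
The plan is to reuse, essentially verbatim, the randomized-to-deterministic reduction that proves Theorem~\ref{thm:randomizedPrev}, changing only which deterministic lower bound is invoked at the end: we now appeal to our Theorem~\ref{thm:main} instead of the $\Omega(\oneovereps\cdot\log\oneovereps)$ bound of~\cite{hung10_qs_lower_bound}, and since Theorem~\ref{thm:main} holds for streams of length $N_k=\oneovereps\cdot 2^k$ for every $k\ge 1$ rather than only for one specific length $\Theta\!\left(\oneoverepssquared\cdot\log^2\oneovereps\right)$, the resulting randomized lower bound is no longer tied to a particular stream length.

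Concretely, fix $N=N_k=\oneovereps\cdot 2^k$ and suppose for contradiction that there is a randomized comparison-based $\eps$-approximate quantile summary with failure probability strictly below $\delta=1/N!$ that stores $o(\oneovereps\cdot\log\log\frac1\delta)$ items on every stream of length $N$. In the comparison-based model the behaviour of the summary on a stream depends only on its order type, so it suffices to consider the $N!$ orderings of one fixed set of $N$ distinct items; for each such ordering the failure probability is below $1/N!$, so by the union bound the probability of erring on at least one of them is strictly below $1$. Hence there is a fixing of the random bits under which the summary answers correctly on every length-$N$ stream, and hard-coding this fixing yields a \emph{deterministic} comparison-based $\eps$-approximate quantile summary that uses $o(\oneovereps\cdot\log\log\frac1\delta)$ space on streams of length $N$. (The deterministic lower-bound construction of Section~\ref{sec:construction} only ever queries the summary on the full streams $\pi,\rho$ of length $N_k$, so correctness on length-$N$ streams is all that is needed to reach a contradiction.)

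It then remains to translate the space bound: since $\log\frac1\delta=\log N! = \Theta(N\log N)$ we get $\log\log\frac1\delta = \Theta(\log N)$, and because $N=\oneovereps\cdot 2^k$ with $\eps$ fixed we have $\log_2 N = k+\log_2\oneovereps = \Theta(k)=\Theta(\log\eps N)$; therefore the hard-coded deterministic summary uses $o(\oneovereps\cdot\log\eps N)$ space on length-$N$ streams, contradicting Theorem~\ref{thm:main}. There is no genuinely hard step here — it is the same argument as for Theorem~\ref{thm:randomizedPrev} — and the only points that require a moment of care are (i) the union bound over the finitely many order types of an $N$-element set, which is the standard move in the comparison-based model, and (ii) verifying that $\log\log\frac1\delta$, $\log N$ and $\log\eps N$ all coincide up to constant factors in the relevant range of $N$, so that our stronger deterministic bound can be substituted directly into the existing reduction.
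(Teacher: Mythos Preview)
Your proposal is correct and follows essentially the same approach as the paper: the paper explicitly states that the theorem is obtained ``using the same proof'' as Theorem~\ref{thm:randomizedPrev}, i.e., the union-bound-over-permutations reduction from randomized to deterministic, with Theorem~\ref{thm:main} substituted for the Hung--Ting bound. Your added remarks---that the construction only needs correctness on full-length streams, and the explicit verification that $\log\log\tfrac1\delta$, $\log N$, and $\log\eps N$ agree up to constants for fixed $\eps$---are sound clarifications that the paper leaves implicit.
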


Note that the lower bound of $\Omega(\oneovereps\cdot \log \log \frac{1}{\delta})$
for randomized quantile summaries trivially holds if $\delta > 0$ is a fixed constant (say, $\delta = 0.01$),
since any quantile summary needs to store $\Omega(\oneovereps)$ items.
It remains an open problem whether or not the lower bound of $\Omega(\oneovereps\cdot \log \log \frac{1}{\delta})$
holds for, e.g., $\delta = 1/\poly(N)$ or $\delta = 1/\polylog(N)$.

\subsection{Relative-Error Quantiles}
Note that the quantiles problem studied in this paper gives a \emph{uniform}
error guarantee of $\eps N$ for any quantile $\phi\in [0,1]$. 
A stronger, relative-error guarantee of $\eps \phi N$ was proposed by 
Cormode \etal~\cite{cormode05_biased_quantiles}, under the name of \emph{biased quantiles}.
Namely, given a query $\phi\in [0,1]$, an $\eps$-approximate relative-error
quantile summary returns a $\phi'$-quantile for some
$\phi' = \left[(1 - \eps)\cdot \phi, (1 + \eps)\cdot \phi\right]$.\footnote{
Strictly speaking, the definition in~\cite{cormode05_biased_quantiles}
is weaker, requiring only to approximate items at ranks $\phi^j\cdot N$
with error at most $\eps\cdot \phi^j \cdot N$ for $j=0, \dots, \lfloor \log_{1/\phi} N \rfloor$
and some parameter $\phi\in (0,1)$ known in advance.}
In other words, when queried for the $k$-th smallest item (where $k = \lfloor \phi N\rfloor$),
the algorithm returns the $k'$-th smallest item for some
$k' \in \left[(1 - \eps)\cdot k, (1 + \eps)\cdot k\right]$.
Note that the relative-error guarantee and the uniform guarantee of $\eps N$ are essentially the 
same for $\phi = \Omega(1)$, up to a constant factor. That is, biased quantiles
provide a substantially stronger guarantee for extreme values of $\phi$ only, e.g., for $\phi = 1/\sqrt{N}$.

Any summary for biased quantiles, even constructed offline,
requires space of $\Omega(\oneovereps\cdot \log \eps N)$, which is the best lower bound
proved so far. This follows by observing that any summary needs to
store the 
$\oneovereps$ smallest items; among the next $\oneovereps$ items, it
should store every other one; 
and more generally, it needs to store $\Omega(\oneovereps)$ items 
among those with ranks between $\frac{2^i}{\eps}$ and $\frac{2^{i+1}}{\eps}$ for any
$i=0,\dots, \log \eps N$; see~\cite{CormodeKLTV23} for a formal proof.
The state-of-the-art upper bounds for the space requirement in the streaming setting
are $\O(\oneovereps\cdot \log^3 \eps N)$, using a deterministic comparison-based
``merge \& prune'' strategy~\cite{Zhang07_biased_quantiles},
and $\O(\oneovereps\cdot \log \eps N\cdot \log |U|)$ for a fixed universe $U$~\cite{Cormode06_biased_quantiles_U},
using a modification of q-digest from~\cite{shrivastava04_bounded_universe_qs}.
The first randomized algorithms were sampling-based,
requiring space of $\O(\frac{1}{\eps^2}\cdot \log \frac{1}{\delta}\cdot \log \eps N)$
in the worst case~\cite{Gupta03_counting_inversions,zhang06_randomized_biased_quantiles}.
Recently, two new randomized algorithms appeared:
ReqSketch~\cite{CormodeKLTV23}, achieving relative error in space  $\O(\frac{1}{\eps}\cdot \sqrt{\log \frac{1}{\delta}}\cdot \log^{1.5} \eps N)$, even in the more general setting of mergeability~\cite{agarwal13_mergeable_summaries},
and a version of ReqSketch with ``elastic compactors''~\cite{GribelyukSWY25} that achieves near-optimal space $\widetilde{\O}(\frac{1}{\eps}\cdot \log \eps N)$, where $\widetilde{\O}$ hides factors poly-logarithmic in $1/\eps, \log 1/\delta,$ and $\log N$;
however, the mergeability properties of the latter sketch are unknown.

We show that our construction from Section~\ref{sec:construction} can be used
to improve the lower bound for $\eps$-approximate biased quantile summaries
by a further $\log \eps N$ factor.
Note that the definition of  comparison-based summaries (Definition~\ref{def:comparisonBased}) 
translates to this setting, as well as Definitions~\ref{def:equivalentStates}
and~\ref{def:indistinguishableStreams} which define equivalent memory states and indistinguishable streams.

\begin{theorem}\label{thm:biasedQuantilesLB}
For any $\eps > 0$ small enough,
there is no deterministic comparison-based $\eps$-approximate relative-error quantile summary 
which stores $o(\oneovereps\cdot \log^2 \eps N)$ items on any input stream of length $N$.
\end{theorem}

\begin{proof}[Proof sketch.]
For an integer $k$, we show that any deterministic comparison-based 
$\eps$-approximate relative-error quantile summary needs to use
$\Omega(\oneovereps\cdot k^2)$ space 
on some stream of length $\oneovereps\cdot 2^k$, so that $k = \Omega(\log \eps N)$.
Our plan is to execute $k-1$ instances of the adversarial strategy, denoted $A_i$ for $i = 1, \dots, k-1$,
with the $i$-th instance consisting of items whose final ranks will be in $(\oneovereps\cdot 2^i, \oneovereps\cdot 2^{i+1}]$ and
starting with $\frac{2}{\eps}$ smallest items which must be stored at all times.
However, we need to interleave these instances and use that the \emph{average} space usage
of the algorithm on $A_i$ is  $\Omega(\oneovereps\cdot i)$ to show the desired space bound\footnote{
	The previous version of this manuscript incorrectly claimed that it suffices to execute the instances
	$A_i$ one after the other, in their order; however, this does not work as the summary for each of these $A_i$'s
	can be compressed to $O(\oneovereps)$ space after $A_i$ ends, thereby achieving total space only $O(\oneovereps\cdot k)$.
}.
Since the average will be over all $\oneovereps\cdot 2^k$ steps for all $i$,
summing over all $i$'s gives the desired lower bound of $\Omega(\oneovereps\cdot k^2)$.

In more detail, the lower bound construction works as follows:
Suppose w.l.o.g.\ that $\frac{2}{\eps}$ is an integer.
The adversary sends $\frac{2}{\eps}$ distinct items that will be smaller than any other item generated later.
Next, it selects $k-1$ disjoint intervals $(\ell_i, r_i)$ for $i = 1, \dots, k-1$ such that
$r_{i-1} < \ell_i < r_i < \ell_{i+1}$.
Then, the adversary executes strategies $A_i := \advstrat(i, \emptyset, \emptyset, (\ell_i, r_i), (\ell_i, r_i))$
for $i = 1, \dots, k-1$ \emph{in parallel} and always adding items to the same streams $\pi, \rho$ as follows:
For $j = 0, \dots, \oneovereps\cdot 2^k - \frac{2}{\eps}$, if the binary representation of $j$ ends with $z$ zeros,
send the next item from strategy $A_{k - 1 - z}$. 
That is, the adversary sends an item from $A_{k-1}$ in every odd step, an item from $A_{k-2}$ in every fourth step, and so on.
The generated streams $\pi, \rho$ are indistinguishable which follows from a similar argument as in Lemma~\ref{lem:streamsIndistinguishable} (i.e., iteratively applying Lemma~\ref{lem:indistAppending}), additionally using that the intervals $(\ell_i, r_i)$ are disjoint.
The length of both $\pi$ and $\rho$ equals $\frac{2}{\eps} + \sum_{i = 1}^{k-1} \oneovereps\cdot 2^i = \oneovereps\cdot 2^k$.

Note that while the generated instances $A_i$ use distinct ranges $(\ell_i, r_i)$,
there is only a single summary processing all of them in parallel.
Nevertheless, the summary executed on $A_i$ for any $i$ with the item array restricted to
interval $(\ell_i, r_i)$ works as a deterministic streaming algorithm with the uniform error for the items from $(\ell_i, r_i)$. Therefore, the space-gap inequality of Lemma~\ref{lem:spaceVsGap} applies; in particular,
the analysis of Section~\ref{sec:spaceGapInquality} is applied to the substream with items in $(\ell_i, r_i)$
for $i = 1, \dots, k - 1$.

Furthermore, the largest gap inside $(\ell_i, r_i)$ can be at most $4\cdot 2^i$ as the total number of items smaller than $r_i$ is $\oneovereps\cdot 2^{i+1}$.
Hence, in $(\ell_i, r_i)$, the algorithm needs to store at least $\Omega\left(\oneovereps\cdot i\right)$ items
on average over all steps of strategy $A_i$. Since the steps of $A_i$ are evenly distributed in streams $\pi, \rho$,
the average over steps restricted to $(\ell_i, r_i)$ is the same as the average over all steps of $\pi, \rho$.
Thus, the average space usage of the algorithm must be
$\Omega\left(\oneovereps\cdot \sum_{i = 1}^{k-1} i\right) = \Omega\left(\oneovereps\cdot k^2\right)$.
\end{proof}

Considering randomized algorithms, the same reduction as for the uniform quantiles
problem in Section~\ref{sec:randomlb} shows
that there is no comparison-based randomized quantile summary running in 
space $o(\oneovereps\cdot \log \eps N\cdot \log \log \frac{1}{\delta})$
that satisfies the relative error with probability more than $1-\delta$
for $\delta = 1/N!$.
Closing the gaps for (deterministic or randomized) relative-error quantiles
remains open.

\para{Acknowledgments.}
The authors are grateful to Hongxun Wu for pointing out an issue in the lower bound for relative-error algorithms (Theorem~\ref{thm:biasedQuantilesLB})
in the previous versions of this paper.
The work was supported by European Research Council grant ERC-2014-CoG 647557.

\bibliographystyle{plain}
\bibliography{references-QS}

\end{document}